\newcommand{\keywords}[1]{\noindent\textbf{\textit{Keywords---}} #1}
\newcommand{\C}{\ensuremath{\mathbb{C}}}
\newcommand{\Z}{\ensuremath{\mathbb{Z}}}
\newcommand{\R}{\ensuremath{\mathbb{R}}}
\newcommand{\ones}{\mathbbm{1}\hspace*{-0.5mm}}
\newcommand{\Ann}{\ensuremath{{\rm{Ann}}}}
\newcommand{\Per}{\ensuremath{{\rm{Per}}}}
\newcommand{\per}{\ensuremath{{\rm{Per}}}}
\newcommand{\supp}{\ensuremath{{\rm{Supp}}}}
\newcommand{\A}{\ensuremath{\mathcal{A}}}
\newcommand{\inner}[2]{{\langle #1, #2 \rangle}}
\newcommand{\Patt}[2]{\ensuremath{\mathcal{L}_{#2}(#1)}}
\newcommand{\Lang}[1]{\ensuremath{\mathcal{L}(#1)}}
\renewcommand{\vec}[1]{\mathbf{#1}}
\newcommand{\XP}{\ensuremath{{\cal X}}}
\newcommand{\ocl}[1]{\ensuremath{ \overline{{\cal  O}(#1)}} }
\newtheorem{remark}{Remark}
\newtheorem{theorem}{Theorem}
\newtheorem{lemma}{Lemma}
\newtheorem{corollary}{Corollary}
\theoremstyle{definition}
\newtheorem{exm}{Example}
\newenvironment{example}{%
  \begin{exm}
}{%
  \end{exm} \begin{center}\rule{2cm}{0.4pt}\end{center}
}
\newenvironment{thmbis}
  {\addtocounter{theorem}{-1}%
   \begin{theorem}}
  {\end{theorem}}
\begin{document}

\title{Expansivity and periodicity in algebraic subshifts}

\author{Jarkko Kari\\ {\tt jkari@utu.fi} }

\affil{Department of Mathematics and Statistics, University of Turku, Finland}

\maketitle

\abstract{
\begin{center}
\begin{minipage}{\dimexpr\paperwidth-10cm}
\noindent A $d$-dimensional configuration $c:\Z^d\longrightarrow A$ is a coloring of the $d$-dimensional infinite grid by elements of a finite alphabet $A\subseteq\Z$. The configuration $c$ has an annihilator
if a non-trivial linear combination of finitely many translations of $c$ is the zero configuration. Writing $c$ as a $d$-variate formal power series, the annihilator
is conveniently expressed as a $d$-variate Laurent polynomial $f$ whose formal
product with $c$ is the zero power series. More generally, if the formal product is a strongly periodic configuration, we call the polynomial $f$ a periodizer of $c$.
A common annihilator (periodizer) of a set of configurations is called an annihilator (periodizer, respectively) of the set. In particular, we consider annihilators and periodizers of $d$-dimensional subshifts, that is, sets of configurations defined by disallowing some local patterns. We show that a $(d-1)$-dimensional linear
subspace $S\subseteq\R^d$ is expansive for a subshift if the subshift has a periodizer whose support contains
exactly one element of $S$. As a subshift is known to be finite if all
$(d-1)$-dimensional subspaces are expansive, we obtain a simple necessary condition on
the periodizers that guarantees finiteness of a subshift or, equivalently, strong periodicity of a configuration. We provide examples in terms of tilings of $\Z^d$
by translations of a single tile.
 \end{minipage}
 \end{center}
}
\bigskip

\keywords{symbolic dynamics, annihilator, periodicity, expansivity, Golomb-Welch conjecture, Periodic tiling problem}

\section{Introduction}\label{sec1}

A configuration  in this paper is a coloring of the $d$-dimensional grid $\Z^d$ using finitely many colors. Our colors are integers.
A configuration $c$ has an annihilator if the zero configuration can be obtained as a non-trivial
linear combination of suitable translations of $c$.
In other terms, annihilation means that a linear cellular automaton
maps the configuration $c$ to the zero configuration.
This mapping, in the terminology of digital signal processing, is filtering
by a $d$-dimensional discrete-time
finite-extend impulse response (FIR) filter. Writing $c$ as a $d$-variate formal power series, the annihilator
is conveniently expressed as a $d$-variate Laurent polynomial $f$ whose formal
product with $c$ is the zero power series.

Configurations that have annihilators come up in several contexts. Every low-complexity configuration has an annihilator, where low-complexity means
that the number of patterns in the configuration of some finite fixed shape $D\subseteq\Z^d$ is at most the size $\lvert D\rvert$ of the shape~\cite{icalp}. Low-complexity configurations
are the object of interest in the unsolved Nivat's conjecture~\cite{nivat}, and also in the recently solved periodic tiling problem~\cite{greenfeld-tao2022} where
tilings of $\Z^d$ by translates of a single tile are low-complexity configurations~\cite{icalp}. Also so-called perfect colorings of grid graphs have annihilators~\cite{DLT}.

Configurations with annihilators have global rigidity, although they are not necessarily periodic.
In the two-dimensional case, periodicity in all directions
is known to be enforced if the annihilator
has no line polynomial factors, that is, an annihilating polynomial does not have a non-monomial factor whose monomials are on a single line~\cite{fullproofs,surveyjarkko,DLT}. In this paper we present a similar
condition that works in all dimensions $d$. More generally, we provide a
condition on the annihilator that enforces expansivity: this is a directional
determinism property studied in multidimensional symbolic dynamics.
Expansivity in all directions is known to imply strong periodicity~\cite{Boyle-Lind}.

The article is organized as follows. In Section~\ref{sec:preliminaries} we present
necessary terminology, our notations and some results we need from literature.
In Section~\ref{sec:tiling} we discuss a particular application: tilings
of $\Z^d$ by translated copies of a single tile. Throughout the article, we
demonstrate our methods with examples that come from this setup.
Section~\ref{sec:expansivity} contains the new contributions. We prove a
condition on annihilators that guarantees expansivity, and consequently obtain a
condition that implies strong periodicity of configurations. We provide several examples,
including a discussion on the relation to the Golomb-Welch conjecture. We fisnish with some concluding remarks in Section~\ref{sec:conclusion}.


\section{Preliminaries}
\label{sec:preliminaries}

We start by defining the necessary terminology and concepts. This part is included
for the convenience of the reader although it greatly repeats what is written, for example, in~\cite{surveyjarkko}.

\subsection*{Configurations and periodicity}

A $d$-dimensional \emph{configuration}  over a finite alphabet $A$ is an assignment $$c:\Z^d\longrightarrow A$$
of symbols of $A$ on the infinite grid $\Z^d$.
For any configuration $c\in A^{\Z^d}$ and any cell $\vec{u}\in\Z^d$, we denote by $c_{\vec{u}}$ the letter $c(\vec{u})$
that $c$ has in the cell $\vec{u}$.

 For a vector $\vec{t}\in\Z^d$,  the \emph{translation} $\tau^{\vec{t}}$
shifts a configuration $c$ so that the cell $\vec{t}$ is moved to the cell $\vec{0}$, that is,  $\tau^{\vec{t}}(c)_{\vec{u}}=c_{\vec{u}+\vec{t}}$ for all $\vec{u}\in\Z^d$.
We say that $c$ is \emph{periodic} if $\tau^{\vec{t}}(c)=c$ for some non-zero $\vec{t}\in\Z^d$. In this case $\vec{t}$ is a \emph{vector of periodicity} and $c$ is also called
\emph{$\vec{t}$-periodic}. If there are $d$ linearly independent vectors of
periodicity (viewed as elements of the vector space $\R^d$) then $c$ is called \emph{strongly periodic}.
We denote by $\vec{e}_i=(0,\dots,0,1,0\dots, 0)$ the basic $i$'th unit coordinate vector, for $i=1,\dots ,d$.
A strongly periodic $c\in A^{\Z^d}$ has automatically, for some $k>0$, vectors of periodicity $k\vec{e}_1, k\vec{e}_2, \dots ,k\vec{e}_d$ in the $d$ coordinate directions.

\subsection*{Patterns and pattern complexity}

Let $D\subseteq\Z^d$ be a finite set of cells, a \emph{shape}. A \emph{$D$-pattern} is an assignment $p\in A^D$  of symbols in the shape $D$. A
\emph{(finite) pattern} is a $D$-pattern for some shape $D$. We call $D$ the \emph{domain} of the pattern. Notation $A^*$ is used for the set of all finite patterns over the alphabet $A$ (where the dimension $d$ is assumed to be known).

We say that a finite pattern $p$ of shape $D$ \emph{appears} in a configuration $c$ if for some $\vec{t}\in\Z^d$
we have $\tau^{\vec{t}}(c)\restriction_{D}=p$. We also say that $c$ \emph{contains} the pattern $p$ in the position $\vec{t}$. For a fixed $D$,
the set of $D$-patterns that appear in a configuration $c$ is denoted by $\Patt{c}{D}$. We denote by $\Lang{c}$ the set of all finite patterns that appear in $c$, i.e., the union of
$\Patt{c}{D}$ over all finite $D\subseteq\Z^d$.

The \emph{pattern complexity} of a configuration $c$ with respect to a shape $D$ is the number of different $D$-patterns that $c$ contains. A sufficiently low pattern
complexity forces global regularities in a configuration. A relevant threshold happens when the pattern complexity is at most $\lvert D\rvert$, the number of cells
in shape $D$. Hence we say that $c$ has \emph{low complexity} with respect to shape $D$ if
$$\lvert\Patt{c}{D}\rvert\leq \lvert D\rvert.$$
We call $c$ a \emph{low complexity configuration} if it has low complexity with respect to some finite shape $D$.

\subsection*{Subshifts}

Let $p\in A^D$ be a finite pattern of a shape $D$.
The set $[p]=\{c\in A^{\Z^d}\ \mid\ c\restriction_{D}=p\}$ of configurations that have $p$ in the domain $D$ is called
the \emph{cylinder} determined by $p$.
The collection of cylinders $[p]$ is a base of a compact topology on $A^{\Z^d}$, the \emph{prodiscrete} topology.
See, for example, the first few pages of~\cite{tullio} for details.
The topology is equivalently defined by a metric on $A^{\Z^d}$ where two configurations are close to each other if they agree
with each other on a large region around the cell $\vec{0}$. Cylinders are clopen in the topology: they are both open and closed.

A subset $X$ of $A^{\Z^d}$ is called a \emph{subshift} if it is closed in the topology
and closed under translations.
Note that -- somewhat nonstandardly --  we allow $X$ to be the empty set.
By a compactness argument one has that every configuration $c$ that is not in $X$ contains a finite pattern $p$
that prevents it from being in $X$: no configuration that contains $p$ is in $X$. We can then as well  define subshifts  using forbidden patterns:
given a set $P$ of finite patterns we define
$$\XP_P=\{c\in A^{\Z^d}\ \mid\  \Lang{c}\cap P=\emptyset\},$$
the set of configurations that do not contain any of the patterns in $P$.
The set $\XP_P$ is a subshift, and every subshift is $\XP_P$ for some $P$.
If $X=\XP_P$ for some finite $P$ then $X$ is a \emph{subshift of finite type} (SFT).


For a subshift $X\subseteq A^{\Z^d}$ (or actually for any set $X$ of configurations) we define
its language $\Lang{X}\subseteq A^*$ to be the set of all finite patterns that appear in some element of $X$, that is, the union of
sets $\Lang{c}$ over all $c\in X$. For a fixed shape $D$, we analogously define
$\Patt{X}{D}=\Lang{X}\cap A^D$, the union of all $\Patt{c}{D}$ over $c\in X$. We say that $X$
has low complexity with respect to shape $D$ if $\lvert\Patt{X}{D}\rvert\leq \lvert D\rvert$.
For example, if we fix shape $D$ and a small set
$P\subseteq A^D$ of at most $\lvert D\rvert$ allowed patterns of shape $D$,
then $X=X_{A^D\setminus P}=\{c\in A^{\Z^d}\ \mid\ \Patt{c}{D}\subseteq P \}$
is a low complexity SFT since $\Patt{X}{D}\subseteq P$ and $\lvert P\rvert\leq \lvert D\rvert$.

The \emph{orbit} of a configuration $c$ is the set ${\cal  O}(c) = \{\tau^{\vec{t}}(c)\ \mid\ \vec{t}\in\Z^2\ \}$ of all its translates, and
the \emph{orbit closure} $\overline{{\cal  O}(c)}$ of $c$ is the topological closure of its orbit. The orbit closure is a subshift,
and in fact it is the intersection of all subshifts that contain $c$. In terms of finite patters,
$c'\in \overline{{\cal  O}(c)}$ if and only if $\Lang{c'}\subseteq\Lang{c}$. Of course,
the orbit closure of  a low complexity configuration is a low complexity subshift.

\subsection*{Annihilators and periodizers}

To use commutative algebra we assume that $A\subseteq \Z$, i.e., the symbols in the configurations are integers. We also maintain the assumption that $A$ is finite.
We express
a $d$-dimensional configuration $c\in A^{\Z^d}$ as a formal power series  over $d$ variables $x_1,\dots x_d$ where the monomials address
cells in a natural manner $x_1^{u_1}\cdots x_d^{u_d} \longleftrightarrow (u_1,\dots ,u_d)\in\Z^d$, and the coefficients of the monomials in the power series
are the symbols at the corresponding cells.
Using the convenient vector notation $\vec{x}=(x_1,\dots x_d)$ we write
$\vec{x}^{\vec{u}}=x_1^{u_1}\cdots x_d^{u_d}$ for the monomial that represents cell
$\vec{u}=(u_1,\dots u_d)\in\Z^d$. Note that all our power series and polynomials are \emph{Laurent} as we allow negative as well as positive powers of variables.
Now the configuration $c\in\A^{\Z^d}$ can be coded as the formal power series
$$
c(\vec{x}) = \sum_{\vec{u}\in\Z^d} c_{\vec{u}}\vec{x}^{\vec{u}}.
$$
The power series $c(\vec{x})$ is \emph{integral} (the coefficients are integers) and
because $A\subseteq \Z$ is finite, it is \emph{finitary}
(there are only finitely many different coefficients). Henceforth we treat configurations as integral, finitary power series. By default, for any Laurent power series or polynomial $f$ we denote by $f_{\vec{u}}$ the coefficient of $\vec{x}^{\vec{u}}$.

Note that the power series are indeed formal: the role of the variables is only to provide the position information on the grid. We may sum up two power series, or multiply a power series with a polynomial, but we never plug in any values in the variables.
Multiplying a power series $c(\vec{x})$ by a monomial $\vec{x}^{\vec{t}}$ simply adds $\vec{t}$ to the exponents of all monomials, thus producing the
power series of the translated configuration $\tau^{\vec{t}}(c)$. Hence the configuration $c(\vec{x})$
is $\vec{t}$-periodic if and only if $\vec{x}^{\vec{t}}c(\vec{x})=c(\vec{x})$, that is,
if and only if $(\vec{x}^{\vec{t}}-1)c(\vec{x})=0$, the zero power series. Thus we can express the periodicity of a configuration in terms of its \emph{annihilation} under
the multiplication with a \emph{difference binomial} $\vec{x}^{\vec{t}}-1$. Very naturally then we introduce
the \emph{annihilator ideal}
$$\Ann(c) = \{ f\in \C[\vec{x}^{\pm 1}]\ \mid\ fc=0 \}$$
containing all the polynomials that annihilate $c$. Here we use the notation $\C[\vec{x}^{\pm 1}]$ for the set of Laurent polynomials with complex coefficients.
Note that  $\Ann(c)$ is indeed an ideal of the Laurent polynomial ring $\C[\vec{x}^{\pm 1}]$.

Let us denote the \emph{support} of a Laurent polynomial  $f\in \C[\vec{x}^{\pm 1}]$ by
$$
\supp(f)=\{\vec{u}\in\Z^d\ \mid\ f_{\vec{u}}\neq 0\}.
$$
\begin{remark}
\label{remark:r1}
If a configuration $c$ has an annihilator $f$ with complex coefficients then it also
has an annihilator $f'$ with integer coefficients that satisfies
$\supp(f')=\supp(f)$.
\end{remark}

To see why the remark is true, note that
the annihilation condition $fc=0$ can be viewed as
a homogeneous system of linear equations for the coefficients of the annihilating polynomial $f$.
The coefficients
of the variables in the equations come from the configuration $c$ and are hence integers.
It is easy to see that for any (complex valued) solution of a homogeneous linear system with integer coefficients
there is also an integer valued solution with the property that each variable that had a non-zero value in the original complex solution
also has a non-zero value in the new integral solution.
The integral solution provides the coefficients of an integral annihilator $f'$
that satisfies $\supp(f')=\supp(f)$.

We find it sometimes convenient to work with the \emph{periodizer ideal}
$$\Per(c) = \{ f\in \C[\vec{x}^{\pm 1}] \ \mid\ \mbox{ $fc$ is strongly periodic } \}$$
that contains those Laurent polynomials whose product with configuration $c$ is strongly periodic.
Clearly also $\Per(c)$ is
an ideal of the Laurent polynomial ring $\C[\vec{x}^{\pm 1}]$, and we have $\Ann(c)\subseteq\Per(c)$.
Moreover, if $\Per(c)$ contains non-zero polynomials, so does $\Ann(c)$. Indeed,
if $f\in \Per(c)$ then $fc$
is annihilated by $\vec{x}^{\vec{t}}-1$ for any period $\vec{t}$ of the strongly periodic $fc$,
and thus $f(\vec{x})(\vec{x}^{\vec{t}}-1)$ is an annihilator of $c$.


Our first observation relates the low complexity assumption to annihilators.
Namely, it is easy to see using elementary linear algebra that any low complexity configuration has at least some non-trivial annihilators:
\begin{lemma}[\cite{icalp}]
\label{th:low_complexity}
Let $c$ be a low complexity configuration.
Then $\Ann(c)$ contains a non-zero polynomial. More precisely, if $c$ has low complexity with respect to a
shape $D\subseteq\Z^d$ then there is a non-zero $f\in\Per(c)$ with $-\supp(f)\subseteq D$.
\end{lemma}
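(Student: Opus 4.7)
The plan is to exploit the fact that the coefficient of $fc$ at position $\vec{t}$ depends only on the local $D$-pattern of $c$ at $\vec{t}$, so that low complexity forces a useful dimension count on the space of candidate polynomials.

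First I would parametrize the candidates: write $f = \sum_{\vec{v} \in D} b_{\vec{v}} \vec{x}^{-\vec{v}}$, which automatically satisfies $-\supp(f) \subseteq D$ and is determined by a coefficient vector $b \in \C^D$. A direct computation yields
$$
(fc)_{\vec{t}} \;=\; \sum_{\vec{v} \in D} b_{\vec{v}}\, c_{\vec{t}+\vec{v}},
$$
so $(fc)_{\vec{t}}$ depends on $c$ only through the $D$-pattern $p_{\vec{t}} = \tau^{\vec{t}}(c)\restriction_D \in \Patt{c}{D}$. Treating each pattern $p \in \Patt{c}{D}$ as a function $D \to \Z \subseteq \C$, this gives a well-defined linear map
$$
\Phi : \C^D \longrightarrow \C^{\Patt{c}{D}}, \qquad \Phi(b)(p) = \sum_{\vec{v} \in D} b_{\vec{v}} p(\vec{v}),
$$
with the key property $(fc)_{\vec{t}} = \Phi(b)(p_{\vec{t}})$ for every $\vec{t}\in\Z^d$.

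Next I would split on whether $\Phi$ is injective. If $\ker\Phi \neq \{0\}$, any non-zero $b \in \ker\Phi$ gives $fc = 0$, hence a non-zero $f \in \Ann(c) \subseteq \Per(c)$ with $-\supp(f) \subseteq D$. If instead $\Phi$ is injective, then $|D| = \dim\C^D \leq \dim\C^{\Patt{c}{D}} = |\Patt{c}{D}|$; combined with the low complexity hypothesis $|\Patt{c}{D}|\leq |D|$ this forces equality of the two dimensions and hence surjectivity of $\Phi$. So I can pick $b \in \C^D$ with $\Phi(b)$ equal to the constant function $1$ on $\Patt{c}{D}$; this $b$ is non-zero (since $\Phi(0) \neq 1$) and the corresponding $f$ satisfies $(fc)_{\vec{t}} = 1$ for every $\vec{t}$. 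The product $fc$ is then the all-ones power series, which is strongly periodic, so $f \in \Per(c)$.

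I do not expect a real obstacle: the argument reduces to a one-line dimension count once $\Phi$ is set up. The only delicate point is noticing that in the surjective case one obtains a \emph{periodizer} rather than an annihilator, which is the reason the ``more precise'' statement concerns $\Per(c)$ and not $\Ann(c)$. The first sentence of the lemma (non-triviality of $\Ann(c)$) then falls out immediately from the remark preceding the lemma: multiplying the periodizer $f$ by $\vec{x}^{\vec{t}}-1$, where $\vec{t}$ is any period of $fc$, produces a non-zero annihilator of $c$.
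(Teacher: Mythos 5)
Your proof is correct, and it is essentially the linear-algebra dimension count that the paper alludes to (the lemma is cited to the literature rather than proved in the text): the coefficient vector of a candidate polynomial supported on $-D$ is constrained only through the at most $\lvert D\rvert$ distinct $D$-patterns of $c$, and the pigeonhole on dimensions yields either an annihilator or a polynomial with $fc=\ones$. Your closing observation — that the non-trivial annihilator then follows by multiplying the periodizer by a difference binomial $\vec{x}^{\vec{t}}-1$ — is exactly the mechanism the paper records just before the lemma, so nothing is missing.
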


The minus sign in front of the support of $f$ in the statement of
the lemma comes from the manner the convolutions in the product
$fc$ are computed: For all $\vec{u}\in\Z^d$
$$
(fc)_{\vec{u}}=\sum_{\vec{v}\in\supp(f)} f_{\vec{v}}c_{\vec{u}-\vec{v}},
$$
so that the pattern of shape $-\supp(f)$ in $c$ at position $\vec{u}$ determines
the new value $(fc)_{\vec{u}}$ at position $\vec{u}$. We see the analogous minus sign
also in
other statements in the rest of the article.

One of the main results of~\cite{icalp} states that if a
configuration $c$ is annihilated by a non-zero polynomial (e.g., due to low complexity) then it is automatically annihilated by a product of difference binomials. This result is fundamental to our approach.
\begin{theorem}[\cite{icalp,fullproofs}]
  \label{th:decompo}
   Let $c$ be a configuration and $f\in\Ann(c)$.
  For every  $\vec{u}\in\supp(f)$ there exist
  pairwise linearly independent $\vec{t}_1, \ldots, \vec{t}_m\in\Z^d$ such that
  each $\vec{t}_i$ is parallel to $\vec{u}_i-\vec{u}$ for some $\vec{u}_i\in\supp(f)\setminus\{\vec{u}\}$, and
  \[ (\vec{x}^{\vec{t}_1} - 1) \cdots (\vec{x}^{\vec{t}_m} - 1) \in \Ann(c) .\]
\end{theorem}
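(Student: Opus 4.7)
The plan is first to reduce to the case $\vec{u}=\vec{0}$. Since $\Ann(c)$ is closed under multiplication by monomials, $\vec{x}^{-\vec{u}}f\in\Ann(c)$ has support $\supp(f)-\vec{u}$ containing $\vec{0}$, and the parallelism conditions in the conclusion transform identically. So I may assume $\vec{u}=\vec{0}$ and $f_{\vec{0}}\neq 0$. I would then induct on $N$, the number of distinct one-dimensional lines through the origin that meet $\supp(f)\setminus\{\vec{0}\}$. The base case $N=0$ forces $\supp(f)=\{\vec{0}\}$, so $f$ is a nonzero constant, $c=0$, and the empty product (with $m=0$) satisfies all conditions vacuously.

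For $N=1$, $\supp(f)$ lies on a single line $\Z\vec{v}$ through the origin, so $f$ is essentially a univariate Laurent polynomial in $\vec{x}^{\vec{v}}$. Along each line parallel to $\vec{v}$, the configuration $c$ takes values in the finite alphabet $A$ and satisfies a fixed linear recurrence with integer coefficients, hence is periodic, with a uniform period across all lines by the finiteness of the state space of the recurrence. Consequently $(\vec{x}^{t\vec{v}}-1)\in\Ann(c)$ for some $t>0$, yielding the result with $m=1$. For the inductive step $N\geq 2$, I would pick a primitive direction $\vec{v}$ whose line meets $\supp(f)\setminus\{\vec{0}\}$, construct an annihilator $f'\in\Ann(c)$ with $\vec{0}\in\supp(f')$ whose support hits strictly fewer direction-lines through $\vec{0}$, apply the induction hypothesis to $f'$, and append a difference binomial in direction $\vec{v}$ obtained by a separate one-dimensional argument applied to an auxiliary configuration derived from $c$ and $f$.

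The main obstacle, I expect, is the construction of $f'$, i.e., eliminating the $\Z\vec{v}$-slice of $f$ modulo $\Ann(c)$. The slice $g(y)=\sum_k f_{k\vec{v}}\,y^k\in\C[y^{\pm 1}]$ cannot be cancelled by polynomial multiplication alone, since $\C[\vec{x}^{\pm 1}]$ is an integral domain. The resolution must exploit that $c$ is finitary and integral: combining a factorization of $g$ in $\C[y^{\pm 1}]$ with the one-dimensional periodicity argument from the $N=1$ case, applied to the one-dimensional sub-configurations of $c$ obtained by restricting to lines parallel to $\vec{v}$, one should produce an auxiliary Laurent polynomial whose product with $f$ still lies in $\Ann(c)$ yet has strictly reduced direction-count. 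Remark~\ref{remark:r1} keeps us within integer-coefficient annihilators with controlled support, and pairwise linear independence of the collected directions is automatic since exactly one new direction $\vec{v}$ is introduced per recursion step.
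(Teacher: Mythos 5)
The paper does not actually prove Theorem~\ref{th:decompo}; it imports it from~\cite{icalp,fullproofs}, so your attempt has to be measured against the proof given there. That proof is global and algebraic: one first shows, via the Frobenius congruence $f(\vec{x})^p\equiv f(\vec{x}^p)\pmod{p}$ together with the boundedness of the coefficients of the finitary series $c$, that $f(\vec{x}^p)\in\Ann(c)$ for every sufficiently large prime $p$; one then analyses the common zeros of these infinitely many annihilators (for a common zero $\vec{z}$, the map $p\mapsto f(\vec{z}^p)$ is a generalized power sum whose vanishing for infinitely many $p$ forces multiplicative relations $\vec{z}^{\,r(\vec{u}_i-\vec{u})}=1$ between points of $\supp(f)$), and finally a Nullstellensatz-flavoured step converts the resulting containment of varieties into membership of a product of difference binomials in $\Ann(c)$. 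Your proposal uses none of this machinery, so it must stand on its own.

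It does not. The reduction to $\vec{u}=\vec{0}$ and the base cases $N=0,1$ are fine (the $N=1$ case is the classical fact that a one-dimensional finitary integral configuration with a non-trivial annihilator is periodic, with a period uniform over parallel lines). The genuine gap is exactly where you flag it: the inductive step. You need an $f'\in\Ann(c)$ with $\vec{0}\in\supp(f')$ whose support meets strictly fewer lines through the origin, but no construction is given, and the route you sketch cannot work as stated. The restrictions of $c$ to lines parallel to $\vec{v}$ are not annihilated by the slice $g(y)=\sum_k f_{k\vec{v}}\,y^k$ --- only the full polynomial $f$ annihilates $c$ --- so the one-dimensional periodicity argument has nothing to act on; and multiplying $f$ by an auxiliary Laurent polynomial $h$ can only enlarge the set of direction-lines met by the support, since $\supp(hf)\subseteq\supp(h)+\supp(f)$ and, as you note yourself, $\C[\vec{x}^{\pm 1}]$ is a domain, so no slice can be cancelled outright. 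Producing a genuinely new annihilator not of the form $hf$ presupposes structural knowledge of $\Ann(c)$ that is essentially what the theorem asserts. There is also a secondary problem: even if such an $f'$ existed, its support need not sit inside $\supp(f)$, so the directions harvested at deeper recursion levels would not be guaranteed parallel to differences $\vec{u}_i-\vec{u}$ of elements of the \emph{original} support --- and that refinement is precisely the part of the statement the present paper relies on. As it stands, the argument does not establish the theorem.
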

In~\cite{icalp} the statement of Theorem~\ref{th:decompo} is given without reference to
elements of $\supp(f)$ but the given proof
provides for an arbitrary $\vec{u}$ in $\supp(f)$ the vectors $\vec{u}_i\in\supp(f)$
as in the statement above.
In Theorem 12 of~\cite{fullproofs} the result is stated in this stronger form. In the present paper
the directions $\vec{u}_i-\vec{u}$ of $\vec{t}_i$ between positions in the support of
 the annihilating polynomial $f$ play a central role. Note also that by Remark~\ref{remark:r1} the annihilating polynomial $f$ does not need to be integral: there always exists one with the same support and with  integer coefficients.

For a subshift $X\subseteq  A^{\Z^d}$, we denote by $\Ann(X)$ the set of Laurent polynomials that annihilate all elements of $X$, and
we call $\Ann(X)$ the annihilator ideal of $X$. Similarly, $\Per(X)$ is the intersection of sets $\Per(c)$ over $c\in X$. All results stated above for $\Ann(c)$ and $\Per(c)$
for a single configuration $c$
work just as well for $\Ann(X)$ and $\Per(X)$ for a subshift $X$, with similar proofs.
 In particular, we have the following subshift variant of Theorem~\ref{th:decompo}.
\begin{thmbis}
  \label{th:decompo2}
  Let $X$ be a subshift and $f\in\Ann(X)$.
  For every  $\vec{u}\in\supp(f)$ there exist
  pairwise linearly independent $\vec{t}_1, \ldots, \vec{t}_m\in\Z^d$ such that
  each $\vec{t}_i$ is parallel to $\vec{u}_i-\vec{u}$ for some $\vec{u}_i\in\supp(f)\setminus\{\vec{u}\}$, and
  \[ (\vec{x}^{\vec{t}_1} - 1) \cdots (\vec{x}^{\vec{t}_m} - 1) \in \Ann(X) .\]
\end{thmbis}

\section{Tilings by translations of a single tile}
\label{sec:tiling}

As a specific setup and a convenient source of examples throughout the article we consider tilings of $\Z^d$ using
translated copies of a single finite shape $D\subseteq\Z^d$. In this context we call $D$
a \emph{tile}.
A tiling by $D$ is expressed as
a binary configuration where symbols $1$ identify the positions where copies of $D$ are placed to fully
cover $\Z^d$
without overlaps. More precisely,  $c\in\{0,1\}^{\Z^d}$ is a \emph{tiling} by $D$ if and only if $c(\vec{x})f_D(\vec{x})=\ones(\vec{x})$ where
$$
f_D(\vec{x})=\sum_{\vec{u}\in D} \vec{x}^{\vec{u}}
$$
is the characteristic polynomial of $D$, and
$$
\ones(\vec{x}) = \sum_{\vec{u}\in \Z^d} \vec{x}^{\vec{u}}
$$
is the uniform configuration of $1$'s.
The polynomial $f_D$ is thus a periodizer of every tiling by $D$.

Let $T_D\subseteq\{0,1\}^{\Z^d}$ be the set of tilings by $D$. Clearly $T_D$ is a low-complexity subshift of finite type: the elements of $T_D$ are exactly the binary configurations whose $(-D)$-patterns have
precisely one occurrence of symbol $1$, and there exist $\lvert-D\rvert$  such patterns in total.

\begin{figure}[t]
\begin{center}
\includegraphics[width=0.6\textwidth]{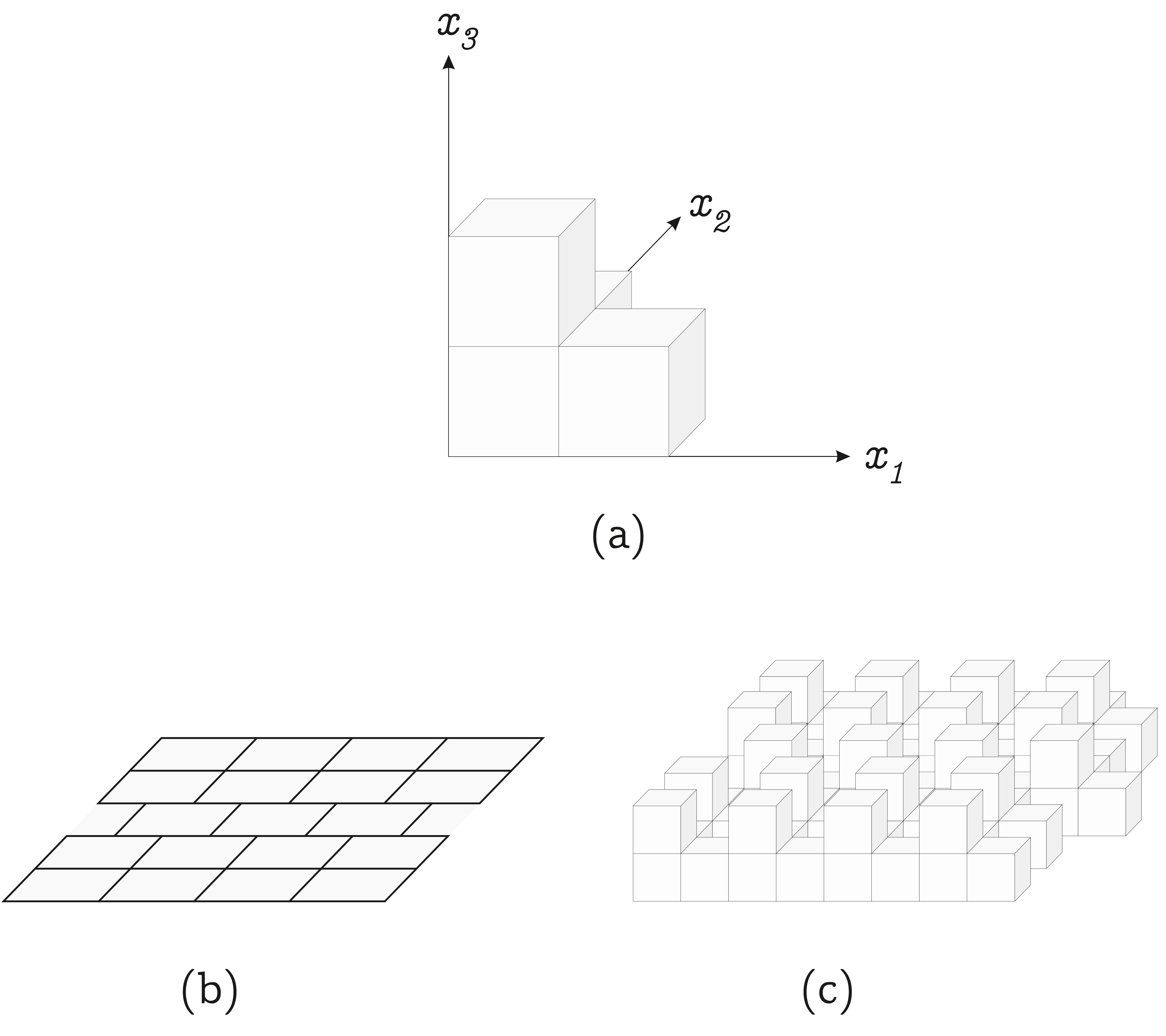}
\end{center}
\caption{(a) The tile $D=\{(0,0,0),(1,0,0),(0,1,0),(0,0,1)\}$ of Example~\ref{ex:ex1},
(b) a tiling of $\Z^2$ by
$2\times 2$ squares that is not strongly periodic since one row of tiles is shifted by one, (c) the corresponding layer of a tiling of $\Z^3$ by $D$. A tiling of $\Z^3$ is obtained by repeating the layer $(1,1,-1)$-periodically.}
\label{fig:fourtile}
\end{figure}

\begin{example}
\label{ex:ex1}
In illustrations we draw tiles in two and three dimensions
as unions of unit squares and cubes. For example,
Figure~\ref{fig:fourtile}(a) shows the tile
$$D=\{(0,0,0),(1,0,0),(0,1,0),(0,0,1)\}.$$
This tile admits tilings of $\Z^3$ that are not strongly periodic.
One may, for example, start with any tiling $a\in\{0,1\}^{\Z^2}$
of $\Z^2$ by the $2\times 2$ square tile $S=\{(0,0),(0,1),(1,0),(1,1)\}$.
Then $c\in \{0,1\}^{\Z^3}$ defined by
$c(x_1,x_2,x_3)=a(x_1+x_3, x_2+x_3)$ is a $(1,1,-1)$-periodic tiling of $\Z^3$ by $D$,
whose slice $(x_1,x_2)\mapsto c(x_1,x_2,0)$ on $\Z\times\Z\times\{0\}$ is
equal to $a$. If $a$ is not strongly periodic then $c$ is not strongly periodic either.
See Figure~\ref{fig:fourtile}(b) and (c).
\end{example}

\begin{example}
\label{ex:ex2}
For a dimension $d$ and radius $r\in\Z_+$, let us denote
$$
B_r^d = \{(n_1,\dots ,n_d)\in\Z^d\ \mid\ \sum_{i=1}^d \lvert n_i\rvert \leq r\}
$$
for the $d$-dimensional radius-$r$ sphere under the Lee metric (also known as the Manhattan metric). See Figure~\ref{fig:leespheres} for
illustrations of $B_2^3$ and $B_3^2$.

If $d\leq 2$ or if $r=1$ then there are strongly periodic
tilings by tile $B_r^d$~\cite{Golomb-Welch-conjecture}: these are perfect codes under the Lee metric. In~\cite{Golomb-Welch-conjecture} it was  conjectured that for other values of $d$ and $r$ the
tile $B_r^d$ does not tile $\Z^d$. There are two natural
variants of the conjecture: the strong  Golomb-Welch conjecture states that no tiling exists, while the
weak Golomb-Welch conjecture postulates that no strongly periodic tiling exists. The conjectures are still open for dimensions $d\geq 6$. It is
known that  the conjecture is true in every dimension for sufficiently large radiuses,
and so the case of radius $r=2$ seems most challenging. See~\cite{Golomb-Welch-survey} for more details.

\end{example}

\begin{figure}[t]
\begin{center}
\includegraphics[width=0.6\textwidth]{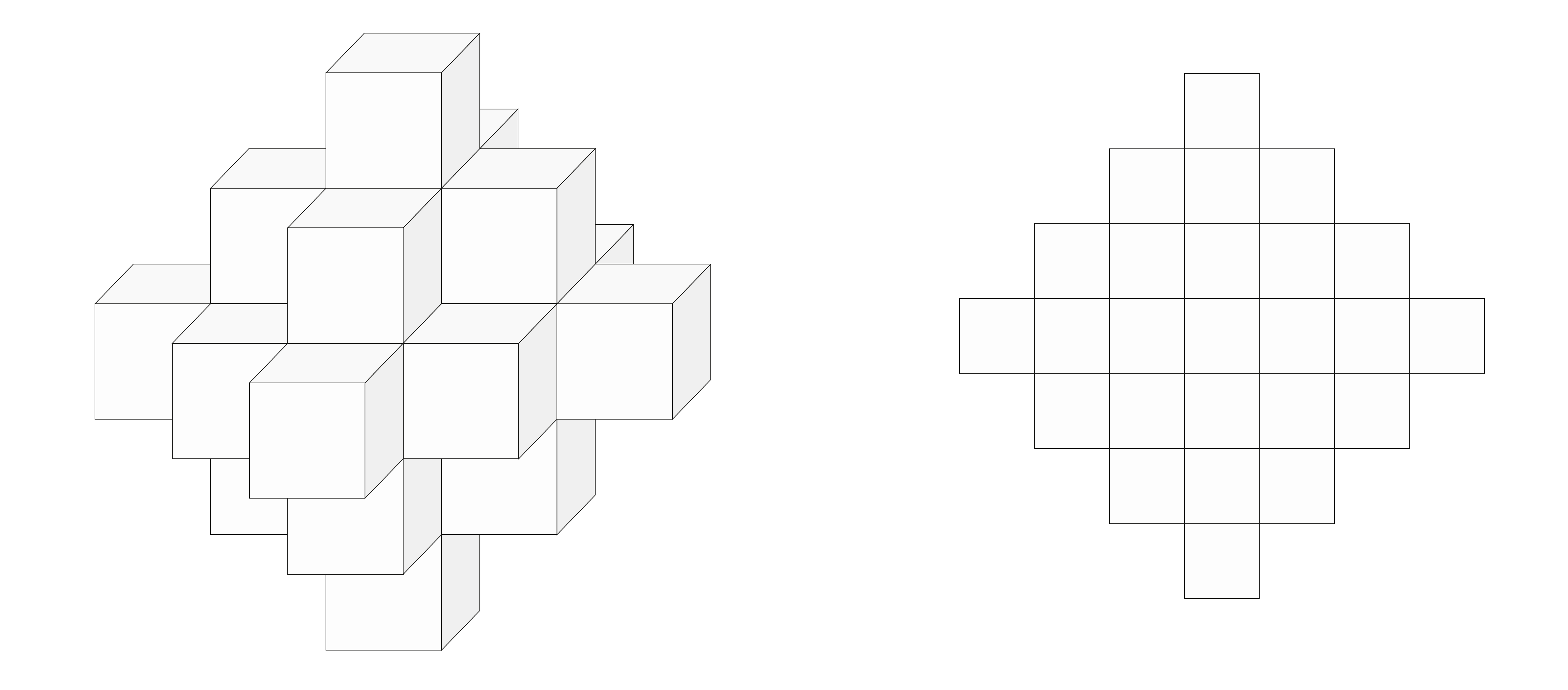}
\end{center}
\caption{The radius-$2$ Lee sphere $B_2^3$ in dimension $d=3$ (on the left), and
the radius-$3$ Lee sphere $B_3^2$ in dimension $d=2$ (on the right).}
\label{fig:leespheres}
\end{figure}

It was recently proved in~\cite{greenfeld-tao2022} that for some dimension $d$, there exists a tile $D\subseteq \Z^d$
such that $T_D$ is an aperiodic SFT, \emph{i.e.}, such that there exists a tiling
but no strongly periodic tiling exists. This provided a negative answer to the Periodic tiling problem~\cite{lagarias-wang}. In contrast, any two-dimensional tile $D\subseteq\Z^2$ that tiles $\Z^2$ also tiles $\Z^2$ periodically~\cite{bhattacharya,greenfeld-tao}.

Interestingly, if $\lvert D\rvert$ is a prime number then every tiling by $D$
is strongly periodic~\cite{Szegedy1998}. This fact has also a simple proof using our algebraic approach, see Example 2 in~\cite{icalp}. In~\cite{Szegedy1998} it was also
shown that $T_D=T_{-D}$ for all tiles $D$, \emph{i.e.}, rotating each tile in place turns a tiling by $D$ into a tiling by $-D$. Thus both $f_D(\vec{x})$ and $f_{-D}(\vec{x})$
are periodizers of valid tilings by $D$.

\section{Expansivity and determinism}
\label{sec:expansivity}

We need some basic concepts of discrete geometry of $\Z^d\subseteq \R^d$.
We use the notation $\inner{\vec{u}}{\vec{v}}$ for the inner product of vectors $\vec{u},\vec{v}\in \R^d$.
For a non-zero vector $\vec{u}\in\R^d$ we denote
$$
H_{\vec{u}}=\{\vec{x}\in\Z^d \mid\ \inner{\vec{x}}{\vec{u}} < 0 \}
$$
for the open discrete half space in the direction $\vec{u}$.
See Figure~\ref{fig:halfplane} for a two-dimensional illustration.

\begin{figure}[h]
\begin{center}
\includegraphics[width=0.4\textwidth]{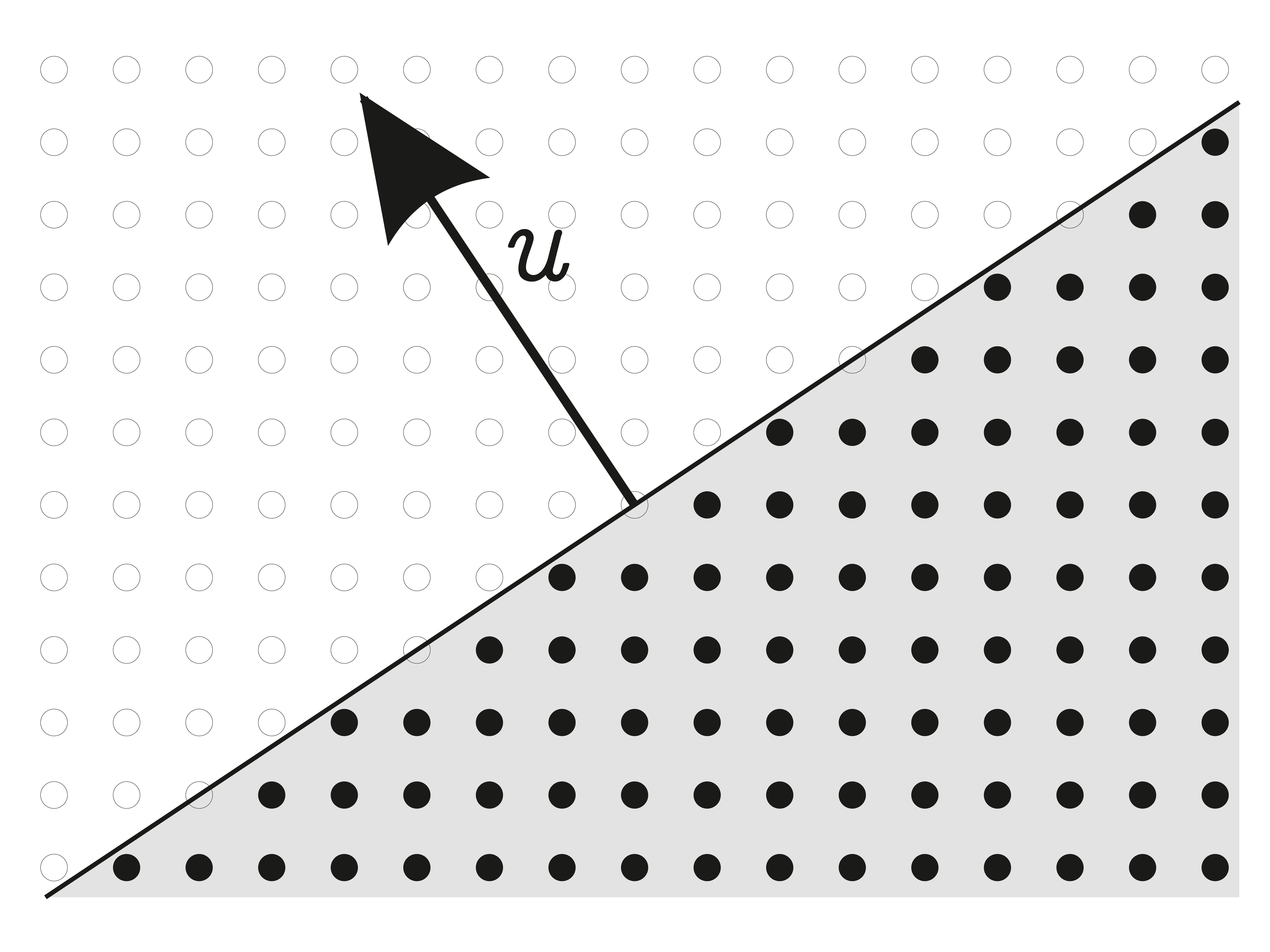}
\end{center}
\caption{The open discrete half space $H_{\vec{u}}$ in dimension $d=2$.}
\label{fig:halfplane}
\end{figure}

A subshift $X$ is \emph{deterministic} in the
direction of $\vec{u}$ if for all $c,c'\in X$
$$
c\restriction_{H_{\vec{u}}}=c'\restriction_{H_{\vec{u}}} \hspace*{5mm}  \Longrightarrow\hspace*{5mm} c=c',
$$
that is, if the contents of a configuration in the discrete half space $H_{\vec{u}}$ uniquely determines the contents in the rest of the cells.
Note that it is enough to verify that the value $c_{\vec{0}}$ on the boundary of the half space is uniquely determined by $c\restriction_{H_{\vec{u}}}$ --- the rest follows by the fact that $X$ is topologically closed and translation invariant.

The following observation is immediate and well known.
It states that if a subshift has as an annihilator (or even as a periodizer) a polynomial $f$ whose negative support $-\supp(f)$
contains a unique position $\vec{v}$ maximally in the direction
of a vector $\vec{u}$ (meaning that the inner product $\inner{\vec{v}}{\vec{u}}$ has maximal value) then $X$ is deterministic in the direction of $\vec{u}$.
In the terminology of~\cite{cyr-kra}, the set $-\supp(f)$ is generating for the subshift,
as knowing all but one symbol of a pattern of shape $-\supp(f)$ in $\Lang{X}$
uniquely identifies also the unknown symbol of the pattern.
In Theorem~\ref{thm:main} we generalize this lemma
to the case where  $-\supp(f)$
contains a position with a unique (but not necessarily maximal)
inner product with $\vec{u}$.

\begin{lemma}
\label{lem:corner}
Let $X$ be a $d$-dimensional  subshift and let $f\in\Per(X)$ be such that $\vec{0}\in\supp(f)$.
Let $\vec{u}\in\R^d$ be a non-zero vector such that $-\supp(f)\setminus\{\vec{0}\}\subseteq H_{\vec{u}}$. Then  $X$ is deterministic in the direction of $\vec{u}$.
\end{lemma}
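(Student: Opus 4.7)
The plan is to use the product formula together with a difference trick. Suppose $c, c' \in X$ agree on $H_{\vec{u}}$; by the remark immediately following the definition of determinism, it suffices to prove $c_{\vec{0}} = c'_{\vec{0}}$. Set $g = c - c'$, a bounded integer-valued formal series whose support lies in the closed complementary half-space $\{\vec{w} : \inner{\vec{w}}{\vec{u}} \geq 0\}$. The hypothesis $-\supp(f)\setminus\{\vec{0}\} \subseteq H_{\vec{u}}$ rephrases cleanly as: every $\vec{v} \in \supp(f)\setminus\{\vec{0}\}$ satisfies $\inner{\vec{v}}{\vec{u}} > 0$, while $\vec{0}$ itself is in $\supp(f)$.

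I would next study $fg = fc - fc'$. Since $f \in \Per(X)$, both $fc$ and $fc'$ are strongly periodic, and taking a common full-rank sublattice of their period lattices shows $fg$ is strongly periodic as well. I then claim $fg$ vanishes throughout $H_{\vec{u}}$: expanding $(fg)_{\vec{w}} = \sum_{\vec{v}\in\supp(f)} f_{\vec{v}}\, g_{\vec{w}-\vec{v}}$, a nonzero summand would require $\vec{w}-\vec{v}$ to lie in $\supp(g)$, that is $\inner{\vec{w}-\vec{v}}{\vec{u}} \geq 0$, which combined with $\inner{\vec{v}}{\vec{u}} \geq 0$ forces $\inner{\vec{w}}{\vec{u}} \geq 0$, contradicting $\vec{w} \in H_{\vec{u}}$.

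The key technical point, and really the only non-routine step, is the observation that a strongly periodic series vanishing on the entire open discrete half-space $H_{\vec{u}}$ must be identically zero. The reason is that its period lattice has full rank in $\R^d$, so it contains some $\vec{l}$ with $\inner{\vec{l}}{\vec{u}} < 0$; iterating translation by $\vec{l}$ sends any given cell arbitrarily deep into $H_{\vec{u}}$, and periodicity then forces the coefficient at that cell to vanish. Applying this to $fg$ gives $fg \equiv 0$.

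Finally, I evaluate the identity $fg = 0$ at the origin:
$$0 = (fg)_{\vec{0}} = f_{\vec{0}}\, g_{\vec{0}} + \sum_{\vec{v} \in \supp(f)\setminus\{\vec{0}\}} f_{\vec{v}}\, g_{-\vec{v}}.$$
For each nonzero $\vec{v} \in \supp(f)$ the hypothesis gives $-\vec{v} \in H_{\vec{u}}$, hence $g_{-\vec{v}} = 0$; since $f_{\vec{0}} \neq 0$ we conclude $g_{\vec{0}} = 0$, i.e., $c_{\vec{0}} = c'_{\vec{0}}$, completing the proof.
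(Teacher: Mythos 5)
Your proof is correct, and it takes a mildly but genuinely different route to the same endpoint. Both arguments come down to the convolution identity at the origin, where the hypothesis forces every term except the $f_{\vec{0}}$-term to involve cells in $H_{\vec{u}}$ on which $c$ and $c'$ agree. The difference is in how the periodizer is upgraded to something that actually annihilates: the paper replaces $f(\vec{x})$ by $f(\vec{x})(\vec{x}^{\vec{t}}-1)$, where $\vec{t}\in -H_{\vec{u}}$ is a common period of $fc$ and $fc'$, which turns $f$ into a common annihilator of $c$ and $c'$ while (one must check, and the paper leaves this implicit) preserving the support condition $\vec{0}\in\supp(f)$ and $-\supp(f)\setminus\{\vec{0}\}\subseteq H_{\vec{u}}$. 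You instead work with the difference $g=c-c'$, observe that $fg$ is strongly periodic and supported outside $H_{\vec{u}}$, and then invoke the small auxiliary fact that a strongly periodic series vanishing on an open discrete half-space is identically zero (correctly justified via a full-rank period lattice containing a vector $\vec{l}$ with $\inner{\vec{l}}{\vec{u}}<0$). Your version costs one extra lemma but avoids the support-preservation check and makes the role of strong periodicity fully explicit; the paper's version is shorter because the binomial trick does that work in one stroke. Either way the final step $f_{\vec{0}}g_{\vec{0}}=0$ with $f_{\vec{0}}\neq 0$ is identical.
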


\begin{proof}
Let $c,c'\in X$ be such that $c\restriction_{H_{\vec{u}}}=c'\restriction_{H_{\vec{u}}}$.
By replacing the polynomial $f(\vec{x})$ by $f(\vec{x})(\vec{x}^{\vec{t}}-1)$ where $\vec{t}\in -H_{\vec{u}}$ is a common period of $f(\vec{x})c(\vec{x})$ and $f(\vec{x})c'(\vec{x})$, we may assume that
$f(\vec{x})\in\Ann(c)$ and $f(\vec{x})\in\Ann(c')$. From
$$
0=(fc)_{\vec{0}} - (fc')_{\vec{0}} = \sum_{\vec{x}\in\supp(f)} f_{\vec{x}} c_{-\vec{x}} - \sum_{\vec{x}\in\supp(f)} f_{\vec{x}} c'_{-\vec{x}}
= f_{\vec{0}} c_{\vec{0}} - f_{\vec{0}} c'_{\vec{0}}
$$
we obtain by dividing with $f_{\vec{0}}\neq 0$ that  $c_{\vec{0}}=c'_{\vec{0}}$.
\end{proof}

If a subshift $X$ is deterministic in directions $\vec{u}$ and $-\vec{u}$ then the
$(d-1)$-dimensional subspace $S=\langle \vec{u}\rangle^{\perp}=\{\vec{v}\in\R^d\ \mid\ \inner{\vec{u}}{\vec{v}}=0\}$ is called an \emph{expansive} space for $X$.
Otherwise it is \emph{non-expansive}.
Using the compactness of $X$ one easily sees that the content of a configuration $c\in X$
within bounded distance from the expansive space $S$ uniquely identifies $c$: There exists $\delta>0$ such that for all $c,c'\in X$,
$$
c\restriction_{B}=c'\restriction_{B} \hspace*{5mm}  \Longrightarrow\hspace*{5mm}  c=c',
$$
where $B=\cup_{\vec{s}\in S} B^{\delta}(\vec{s})$
 and $B^{\delta}(\vec{s})=\{\vec{v}\in\Z^d\ \mid \ \inner{\vec{v}-\vec{s}}{\vec{v}-\vec{s}}<\delta^2\}$ is the ball of radius $\delta$
 around $\vec{s}$ under the usual Euclidean metric. See~\cite{Boyle-Lind} for results concerning expansive spaces of multidimensional subshifts. In particular, the following classical result from~\cite{Boyle-Lind} is central to us, stating that if all $(d-1)$-dimensional subspaces are expansive for a $d$-dimensional
 subshift $X$, then $X$ contains only strongly periodic configurations. This result is our link from deterministic directions to periodicity.

\begin{theorem}[\cite{Boyle-Lind}]
\label{thm:boylelind}
A subshift that is deterministic in every direction is finite, and hence only contains strongly periodic configurations.
\end{theorem}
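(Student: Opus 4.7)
The plan is to reduce the theorem to the classical result of Boyle and Lind on expansive subdynamics by translating the determinism hypothesis into expansivity of every codimension-$1$ linear subspace of $\R^d$.

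First, I would observe that the hypothesis already unpacks into the required expansivity statement. As noted in the discussion preceding the theorem, determinism in the two opposite directions $\vec{u}$ and $-\vec{u}$ forces the hyperplane $\langle\vec{u}\rangle^{\perp}$ to be expansive for $X$. Since every non-zero $\vec{u}\in\R^d$ is a direction of $X$ under our hypothesis, and every $(d-1)$-dimensional linear subspace $S\subseteq\R^d$ arises as $\langle\vec{u}\rangle^{\perp}$ for some non-zero $\vec{u}$, we conclude that every codimension-$1$ subspace is expansive for $X$.

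Next I would invoke the classical theorem of Boyle and Lind~\cite{Boyle-Lind}: for any $\Z^d$ subshift, the set of non-expansive codimension-$1$ subspaces is a closed subset of the Grassmannian that is non-empty as soon as the subshift is infinite. Taking the contrapositive, if every codimension-$1$ subspace of $\R^d$ is expansive for $X$ then $X$ is finite. This step is the real content of the argument and the principal obstacle in a fully self-contained proof: one has to exhibit, inside any infinite subshift, at least one non-expansive direction. Boyle and Lind do this by taking an infinite orbit inside the compact space $X$, extracting a subsequence of translates $\tau^{\vec{t}_n}(c)$ that converges in $X$ while $\vec{t}_n/\lVert\vec{t}_n\rVert$ converges to some unit vector $\vec{u}$, and showing that the limiting unit vector is normal to a non-expansive hyperplane.

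Finally, the conclusion that $X$ contains only strongly periodic configurations is immediate from finiteness: if $X$ is finite then for any $c\in X$ the orbit $\mathcal{O}(c)\subseteq X$ is a finite $\Z^d$-set, so the stabiliser of $c$ is a sublattice of $\Z^d$ of finite index and therefore contains $d$ linearly independent vectors of periodicity, making $c$ strongly periodic.
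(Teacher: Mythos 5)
Your proposal is correct and matches the paper's treatment: the paper states this result as a citation to Boyle and Lind without giving its own proof, and you likewise defer the essential content (that an infinite subshift must have a non-expansive codimension-$1$ subspace) to that same reference, while correctly supplying the routine surrounding reductions (determinism in $\vec{u}$ and $-\vec{u}$ giving expansivity of $\langle\vec{u}\rangle^{\perp}$, and finiteness of $X$ forcing each stabiliser to be a finite-index, hence full-rank, sublattice of $\Z^d$). Nothing further is needed.
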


\begin{example}
Consider a tiling $c\in\{0,1\}^{\Z^3}$ of $\Z^3$ by translations of the tile $$D=(0,0,0),(1,0,0),(0,1,0),(0,0,1)\}$$  from Example~\ref{ex:ex1}, illustrated in Figure~\ref{fig:fourtile}(a).
Suppose that $c$ is $\vec{t}$-periodic for $\vec{t}=k(1,1,1)$ for some $k\in\Z_+$.
Let us prove that $c$ is strongly periodic. Polynomials $f_D(\vec{x})=1+x_1+x_2+x_3$ and
$f_{-D}(\vec{x})=1+x_1^{-1}+x_2^{-1}+x_3^{-1}$, as well as
$\vec{x}^{\vec{t}}-1$ and $\vec{x}^{-\vec{t}}-1$
are periodizers of $c$, and hence they are also in $\Per(X)$ for the orbit closure $X=\overline{{\cal  O}(c)}$ of $c$.
 Lemma~\ref{lem:corner} with the
 periodizers (in fact, annihilators) $\vec{x}^{\vec{t}}-1$ and $\vec{x}^{-\vec{t}}-1$ shows that $X$ is deterministic in every direction $\vec{u}$ that is not perpendicular to $\vec{t}$.
Consider then any non-zero $\vec{u}\perp \vec{t}$, meaning that $\vec{u}=(a,b,c)$ with $a+b+c=0$. If $a=0$ then $\vec{u}=(0,b,-b)$ for $b\neq 0$.
Either $(0,1,0)$ (if $b>0$) or $(0,0,1)$ (if $b<0$) is the unique $\vec{v}\in D$ with the largest inner product with $\vec{u}$. Thus the periodizer
$\vec{x}^{-\vec{v}}f_D(\vec{x})$ of $X$ shows, by Lemma~\ref{lem:corner}, that $X$ is deterministic in the direction $\vec{u}$. Cases $b=0$ and $c=0$ are similar. Finally,
if $a$, $b$ and $c$ are all non-zero then one of them, say $a$, has different sign than the other two. Thus $\vec{v}=(1,0,0)$ is the unique element of $D$ with the maximal or the minimal inner product with $\vec{u}$. Hence $\vec{x}^{-\vec{v}}f_D(\vec{x})$
or $\vec{x}^{-\vec{v}}f_{-D}(\vec{x})$ confirms, by Lemma~\ref{lem:corner}, that $X$ is deterministic in the direction $\vec{u}$. We have shown that $X$ is deterministic in every direction. By Theorem~\ref{thm:boylelind} all elements of $X$, including $c$, are strongly periodic.
\end{example}

%

\subsection*{A sufficient condition for expansivity}

Now we are ready to develop our main tool for establishing expansive spaces of
a subshift with annihilators, and consequently strong periodicity of configurations.
We start by noting how the special annihilator $(\vec{x}^{\vec{t}_1} - 1) \cdots (\vec{x}^{\vec{t}_m} - 1)$ provided by
Theorem~\ref{th:decompo2} gives that
$\langle \vec{u}\rangle^{\perp}$ is expansive for $X$ if $\vec{u}$ is such that
$\inner{\vec{u}}{\vec{t}_i}\neq 0$ for all $i\in\{1,\dots, m\}$.

\begin{lemma}
\label{lem:lemx}
Let $X$ be a $d$-dimensional subshift and $(\vec{x}^{\vec{t}_1} - 1) \cdots (\vec{x}^{\vec{t}_m} - 1)\in\Ann(X)$. For every $(d-1)$-dimensional linear subspace $S\subseteq\R^d$, if $\vec{t}_i\not\in S$ for all $i\in\{1,\dots, m\}$ then $S$
is an expansive space for $X$.
\end{lemma}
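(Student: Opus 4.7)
The plan is to reduce to Lemma~\ref{lem:corner} after a careful sign adjustment of the binomials. Let $\vec{u}\in\R^d\setminus\{\vec{0}\}$ be any vector normal to $S$, so that $S=\langle\vec{u}\rangle^{\perp}$. Since $\vec{t}_i\notin S$, the hypothesis gives $\inner{\vec{u}}{\vec{t}_i}\neq 0$ for every $i\in\{1,\dots,m\}$. Recall that $S$ being expansive means $X$ is deterministic in both directions $\vec{u}$ and $-\vec{u}$; I will handle the direction $\vec{u}$ and then invoke symmetry.

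First I would orient the binomials consistently with respect to $\vec{u}$. Define
\[
\vec{s}_i = \begin{cases} \vec{t}_i & \text{if } \inner{\vec{u}}{\vec{t}_i} > 0, \\ -\vec{t}_i & \text{if } \inner{\vec{u}}{\vec{t}_i} < 0, \end{cases}
\]
so that $\inner{\vec{u}}{\vec{s}_i} > 0$ for every $i$. The key observation is that $\vec{x}^{-\vec{t}_i}-1 = -\vec{x}^{-\vec{t}_i}(\vec{x}^{\vec{t}_i}-1)$, i.e.\ swapping $\vec{t}_i$ with $-\vec{t}_i$ multiplies the factor by a unit of the Laurent polynomial ring. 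Therefore
\[
g(\vec{x}) = \prod_{i=1}^m (\vec{x}^{\vec{s}_i}-1)
\]
differs from the given annihilator by a monomial unit, hence $g\in\Ann(X)\subseteq\Per(X)$.

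Next I examine the support of $g$. Expanding the product, every monomial that survives has exponent of the form $\sum_{i\in I}\vec{s}_i$ for some $I\subseteq\{1,\dots,m\}$. For $I=\emptyset$ this gives the constant term $(-1)^m\neq 0$, so $\vec{0}\in\supp(g)$. For every nonempty $I$, the exponent $\vec{v}=\sum_{i\in I}\vec{s}_i$ satisfies
\[
\inner{\vec{u}}{\vec{v}} = \sum_{i\in I}\inner{\vec{u}}{\vec{s}_i} > 0,
\]
because each summand is strictly positive. Consequently every $\vec{v}\in\supp(g)\setminus\{\vec{0}\}$ lies in $\{\vec{x}\in\Z^d\mid \inner{\vec{x}}{\vec{u}}>0\}$, which means $-\supp(g)\setminus\{\vec{0}\}\subseteq H_{\vec{u}}$. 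Lemma~\ref{lem:corner} now applies to $g$ and yields that $X$ is deterministic in the direction of $\vec{u}$.

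Finally, the same argument applied to $-\vec{u}$ in place of $\vec{u}$ (reorienting each $\vec{t}_i$ so its inner product with $-\vec{u}$ is positive) produces another annihilator witnessing determinism in the direction $-\vec{u}$. Therefore $S=\langle\vec{u}\rangle^{\perp}$ is expansive for $X$. There is no real obstacle here; the only subtle point is recognizing that the sign flips $\vec{t}_i\mapsto -\vec{t}_i$ correspond to multiplication by units and thus preserve ideal membership, so that the product $g$ remains an annihilator of $X$.
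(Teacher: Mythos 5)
Your proof is correct and follows essentially the same route as the paper: flip each $\vec{t}_i$ to $\vec{s}_i$ so that $\inner{\vec{u}}{\vec{s}_i}>0$ (using that this changes the annihilator only by a monomial unit), observe that $\vec{0}\in\supp(g)$ and $-\supp(g)\setminus\{\vec{0}\}\subseteq H_{\vec{u}}$, and apply Lemma~\ref{lem:corner} in both directions $\vec{u}$ and $-\vec{u}$. Your explicit check that the constant term $(-1)^m$ cannot be cancelled and that all other exponents have strictly positive inner product with $\vec{u}$ is a slightly more detailed justification of a step the paper leaves implicit.
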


\begin{proof}
This is an immediate corollary of Lemma~\ref{lem:corner}. Let $\vec{u}\in\R^d$ be such that $S=\langle \vec{u}\rangle^{\perp}$. Noting that
$\vec{x}^{\vec{t}} - 1 = -\vec{x}^{\vec{t}}(\vec{x}^{-\vec{t}} - 1)$,
we may replace any $\vec{t}_i$ by $-\vec{t}_i$ in the annihilator
$f(\vec{x})=(\vec{x}^{\vec{t}_1} - 1) \cdots (\vec{x}^{\vec{t}_m} - 1)$.

By the assumption, for all $i$ we have that
$\inner{\vec{u}}{\vec{t}_i}\neq 0$. If $\inner{\vec{u}}{\vec{t}_i}<0$ we replace
$\vec{t}_i$ by $-\vec{t}_i$ in the annihilator $f$. So we may assume that
$\inner{\vec{u}}{\vec{t}_i}>0$ for all $i\in\{1,\dots, m\}$. But now the
annihilator $f$
satisfies $\vec{0}\in\supp(f)$ and $-\supp(f)\setminus\{\vec{0}\}\subseteq H_{\vec{u}}$,
so that by Lemma~\ref{lem:corner} the subshift $X$ is deterministic in the direction of $\vec{u}$. Since  $S=\langle -\vec{u}\rangle^{\perp}$ we also have determinism
in the direction of $-\vec{u}$.
\end{proof}

The following theorem states a sufficient condition for expansivity in terms of annihilating (or peridizing) polynomials. It generalizes Lemma~\ref{lem:corner}.

\begin{theorem}
\label{thm:main}
Let $X$ be a $d$-dimensional subshift and let $S$ be a proper
linear subspace of $\R^d$.
If $f\in\per(X)$ is such that
\begin{equation}
\label{eq:eq1}
\supp(f)\cap S=\{\vec{0}\}
\end{equation}
then there exist pairwise linearly independent $\vec{t}_1, \ldots, \vec{t}_m\in\Z^d$ such that $\vec{t}_i\not\in S$  for all $i\in\{1,\dots ,m\}$ and
$(\vec{x}^{\vec{t}_1} - 1) \cdots (\vec{x}^{\vec{t}_m} - 1) \in \Ann(X)$. In particular,
if $S$ is $(d-1)$-dimensional then $S$ is expansive for $X$.
\end{theorem}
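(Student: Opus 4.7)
The plan is to reduce to the annihilator case and then invoke Theorem~\ref{th:decompo2} directly. I would produce an annihilator $F \in \Ann(X)$ that still satisfies $\supp(F) \cap S = \{\vec{0}\}$; taking $\vec{u} = \vec{0}$ in Theorem~\ref{th:decompo2} then returns pairwise linearly independent $\vec{t}_1, \dots, \vec{t}_m$ with each $\vec{t}_i$ parallel to some $\vec{u}_i \in \supp(F) \setminus \{\vec{0}\}$. By construction $\vec{u}_i \notin S$, and because $S$ is a linear subspace we immediately get $\vec{t}_i \notin S$. The main conclusion follows, and the expansivity statement for $(d-1)$-dimensional $S$ is then immediate from Lemma~\ref{lem:lemx}.

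To construct $F$, pick any $\vec{v} \in \Z^d \setminus S$, available because $S$ is a proper subspace. The subshift analogue of the single-configuration observation recalled in the preliminaries---that $f \in \Per(c)$ implies $f \cdot (\vec{x}^{\vec{t}} - 1) \in \Ann(c)$ for some $\vec{t}$---yields a positive integer $N$ with $(\vec{x}^{N\vec{v}} - 1)\,f \in \Ann(X)$, and any positive-integer multiple $kN$ also works. A direct justification: the continuous image $\{fc : c \in X\}$ is a compact translation-invariant set of strongly periodic integer-valued configurations, hence finite, and the intersection of its finitely many period lattices is a finite-index sublattice of $\Z^d$, which must meet $\Z\vec{v}$ nontrivially. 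Now set $F_k := (\vec{x}^{kN\vec{v}} - 1)\,f$. Its support lies in $\supp(f) \cup (\supp(f) + kN\vec{v})$, so any $\vec{w} \in \supp(F_k) \cap S$ with $\vec{w} \neq \vec{0}$ forces $\vec{w} - kN\vec{v} \in \supp(f) \setminus \{\vec{0}\}$. For each fixed $\vec{u} \in \supp(f) \setminus \{\vec{0}\}$, the condition $\vec{u} + kN\vec{v} \in S$ has at most one real solution in $k$ (two solutions would place a nonzero scalar multiple of $\vec{v}$ inside $S$), so only finitely many values of $k$ are forbidden. To keep $\vec{0} \in \supp(F_k)$ one also needs $f_{-kN\vec{v}} - f_{\vec{0}} \neq 0$, which is automatic once $-kN\vec{v}$ has left the finite set $\supp(f)$. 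Any $k$ satisfying these finitely many constraints yields the required $F$.

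The only genuinely nonroutine step above is the uniform period claim: it rests on the fact that a compact translation-invariant set of strongly periodic configurations must be finite, which is standard but not entirely trivial and would merit a brief justification in a full write-up. Everything else is Minkowski-sum bookkeeping of supports followed by a direct appeal to Theorem~\ref{th:decompo2}.
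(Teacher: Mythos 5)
Your proposal is correct and follows essentially the same route as the paper: both proofs first upgrade the periodizer $f$ to an annihilator $(\vec{x}^{k\vec{v}}-1)f$ with $\vec{v}\notin S$ and $k$ chosen so that the support still meets $S$ only in $\vec{0}$ (the paper uses a coordinate vector $\vec{e}\notin S$ and ``$k$ large enough''; you use an arbitrary $\vec{v}\notin S$ and ``all but finitely many $k$'', with somewhat more explicit support bookkeeping), and both rest on the same finiteness fact for subshifts of strongly periodic configurations before invoking Theorem~\ref{th:decompo2} at $\vec{u}=\vec{0}$ and Lemma~\ref{lem:lemx}. No gaps.
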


\begin{proof}
Let us first prove that there exists $g\in\Ann(X)$  that satisfies
$\supp(g)\cap S=\{\vec{0}\}$, \emph{i.e.}, the same equation (\ref{eq:eq1}) that the periodizer $f$ satisfies.
Set $Y=\{fc\ \mid\ c\in X\}$ is a subshift that only contains strongly periodic configurations. Such a subshift is finite. (This is proved in~\cite{Ballier} for
two-dimensional subshifts of finite type, but the proof directly
generalizes to subshifts in any dimension $d$.)
As the dimension of $S$ is at most $d-1$, some unit coordinate vector $\vec{e}$
is not in $S$. Because $Y$ is a finite set of strongly periodic configurations,
its elements have a common period in the direction of $\vec{e}$.
Multiples of the period are also periods, so that
there are arbitrarily large integers $k$ such that
$(\vec{x}^{k\vec{e}}-1)f(\vec{x})\in \Ann(X)$.
Because $\vec{e}\not\in S$,
for all large enough $k$ the support of $\vec{x}^{k\vec{e}}f(\vec{x})$ has an empty intersection with $S$. Consequently, some $g(\vec{x})=(\vec{x}^{k\vec{e}}-1)f(\vec{x})$
satisfies $\supp(g)\cap S=\{\vec{0}\}$ and $g\in\Ann(X)$.

Applying Theorem~\ref{th:decompo2}  with the annihilator $g$ and $\vec{u}=\vec{0}$ gives the desired special annihilator $(\vec{x}^{\vec{t}_1} - 1) \cdots (\vec{x}^{\vec{t}_m} - 1)$, as $\vec{u}_i-\vec{u}\not \in S$ for
$\vec{u}_i \in \supp(g)\setminus\{\vec{u}\}$.
The last claim now directly follows from Lemma~\ref{lem:lemx}.
\end{proof}

Theorems~\ref{thm:boylelind} and \ref{thm:main} directly give the following tool for forced strong periodicity.

\begin{corollary}
\label{cor:maincor}
Let $X$ be a $d$-dimensional subshift such that for every non-zero $\vec{u}\in \R^d$
there exists $f\in\Per(X)$ and $\vec{v}\in \supp(f)$ such that
$\inner{\vec{v}}{\vec{u}}\neq \inner{\vec{v}'}{\vec{u}}$ for all $\vec{v}'\in \supp(f)\setminus\{\vec{v}\}$. Then $X$ is finite and thus only contains strongly periodic configurations.
\end{corollary}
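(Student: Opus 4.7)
The plan is to reduce the corollary to Theorem~\ref{thm:boylelind}: it suffices to show that under the given hypothesis, $X$ is deterministic in every direction $\vec{u}\in\R^d\setminus\{\vec{0}\}$. Once we have determinism in every direction, Theorem~\ref{thm:boylelind} immediately gives that $X$ is finite, and every element of a finite subshift is strongly periodic (because a finite orbit closure means the configuration has only finitely many distinct translates, forcing periodicities in all $d$ coordinate directions).

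So fix a non-zero $\vec{u}\in\R^d$ and let $S=\langle\vec{u}\rangle^{\perp}$, a $(d-1)$-dimensional linear subspace of $\R^d$. By hypothesis there exist $f\in\Per(X)$ and $\vec{v}\in\supp(f)$ such that $\inner{\vec{v}}{\vec{u}}\neq\inner{\vec{v}'}{\vec{u}}$ for every other $\vec{v}'\in\supp(f)$. The idea is to translate $f$ so that this distinguished position $\vec{v}$ lands at the origin: set $g(\vec{x})=\vec{x}^{-\vec{v}}f(\vec{x})$. Since $\Per(X)$ is an ideal of the Laurent polynomial ring, $g\in\Per(X)$. Its support is $\supp(g)=\supp(f)-\vec{v}$, and the uniqueness of the inner product $\inner{\vec{v}}{\vec{u}}$ translates precisely into the statement that $\vec{0}$ is the only element of $\supp(g)$ lying on $S$, i.e.
\[
\supp(g)\cap S=\{\vec{0}\}.
\]

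Now apply Theorem~\ref{thm:main} to the periodizer $g$ and the $(d-1)$-dimensional subspace $S$. The theorem yields that $S$ is an expansive space for $X$, which by definition means that $X$ is deterministic in both of the directions $\vec{u}$ and $-\vec{u}$. Since $\vec{u}$ was an arbitrary non-zero vector, $X$ is deterministic in every direction. Invoking Theorem~\ref{thm:boylelind} completes the proof.

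There is essentially no hard step: the argument is a direct composition of Theorem~\ref{thm:main} (applied after translating the periodizer by the monomial $\vec{x}^{-\vec{v}}$ so that the distinguished support element is moved to the origin) with the classical result of Boyle and Lind. The only mild subtlety worth pointing out is that one must use the ideal property of $\Per(X)$ to justify the translation step, and that the interpretation of the hypothesis in terms of the subspace $S=\langle\vec{u}\rangle^{\perp}$ is what makes condition~(\ref{eq:eq1}) of Theorem~\ref{thm:main} apply.
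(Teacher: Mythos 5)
Your proof is correct and follows essentially the same route as the paper: translate the periodizer by $\vec{x}^{-\vec{v}}$ so that the distinguished support element moves to the origin, observe that condition~(\ref{eq:eq1}) then holds for $S=\langle\vec{u}\rangle^{\perp}$, apply Theorem~\ref{thm:main} to get expansivity of every $(d-1)$-dimensional subspace, and conclude with Theorem~\ref{thm:boylelind}. The only difference is expository (you quantify over directions $\vec{u}$ rather than over subspaces $S$, and you spell out the ideal property justifying the translation), which matches the paper's argument exactly.
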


\begin{proof}
For every $(d-1)$-dimensional subspace $S$ we take $\vec{u}\in\R^d$ such that
$S=\langle\vec{u}\rangle^\perp$. Letting $f$ and $\vec{v}$ be as in the statement of the corollary, we have that $\vec{x}^{-\vec{v}}f(\vec{x})$ is a periodizer of $X$ that satisfies (\ref{eq:eq1}). By Theorem~\ref{thm:main} the subspace $S$ is expansive for $X$. Since $S$ was arbitrary, the claim now follows from Theorem~\ref{thm:boylelind}.
\end{proof}

We can also obtain the following corollary for lower dimensional subspaces.

\begin{corollary}
Let $X$ be a $d$-dimensional  subshift, and let $k\leq d-2$. Suppose that for every
$k$-dimensional linear subspace $S\subseteq \R^d$ there exists
$f\in\Per(X)$ such that $\supp(f)\cap S=\{\vec{0}\}$.
Then there exist
$(k+1)$-dimensional linear
subspaces $S_1,\dots ,S_n$, finitely many, such that every $(d-1)$-dimensional
non-expansive space contains
some $S_i$ as its subspace.
\end{corollary}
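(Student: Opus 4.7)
The plan is to combine the intermediate (dimension-free) part of Theorem~\ref{thm:main} with a compactness argument on the Grassmannian $\mathrm{Gr}(k,d)$ of $k$-dimensional linear subspaces of $\R^d$, and then extract the desired finite family from the witness vectors produced by a finite subcover.

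First, for each $k$-dimensional subspace $S$, the hypothesis yields a periodizer $f_S \in \Per(X)$ with $\supp(f_S) \cap S = \{\vec{0}\}$, and Theorem~\ref{thm:main} then gives a finite set $V_S \subseteq \Z^d \setminus S$ of pairwise linearly independent vectors such that $\prod_{\vec{t} \in V_S}(\vec{x}^{\vec{t}} - 1) \in \Ann(X)$. Next, set $\mathcal{U}_S := \{S' \in \mathrm{Gr}(k,d) : V_S \cap S' = \emptyset\}$. Since each condition $\vec{t} \notin S'$ (for a fixed $\vec{t} \neq \vec{0}$) defines an open subset of the Grassmannian, $\mathcal{U}_S$ is an open neighborhood of $S$. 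Compactness of $\mathrm{Gr}(k,d)$ yields a finite subcover $\mathcal{U}_{L_1}, \ldots, \mathcal{U}_{L_N}$; set $V := V_{L_1} \cup \cdots \cup V_{L_N}$, a finite subset of $\Z^d$. I would then take $S_1, \ldots, S_n$ to be the enumeration of all $(k+1)$-dimensional subspaces of $\R^d$ spanned by linearly independent $(k+1)$-element subsets of $V$; this is a finite collection.

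The key verification, and what I expect to be the delicate step, is showing that every non-expansive $(d-1)$-dimensional subspace $T$ contains one of the $S_i$. Lemma~\ref{lem:lemx}, applied in contrapositive form to each of the $N$ annihilators, gives $T \cap V_{L_j} \neq \emptyset$ for every $j$. The heart of the argument is then the dimension bound $\dim \mathrm{span}(T \cap V) \geq k+1$, which I would prove by contradiction: if the span had dimension at most $k$, extend it to a $k$-dimensional subspace $S' \subseteq \R^d$; by the covering property some $V_{L_j} \cap S' = \emptyset$, contradicting $\emptyset \neq T \cap V_{L_j} \subseteq V \cap S' \subseteq V_{L_j} \cap S'$. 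Hence $T$ contains $k+1$ linearly independent vectors from $V$, and their span is a $(k+1)$-dimensional subspace of $T$ already on the list $S_1, \ldots, S_n$. This is precisely where the hypothesis $k \leq d-2$ is used, since we need $k+1 \leq d-1$ for such a subspace to fit inside the $(d-1)$-dimensional $T$. The main obstacle is this dimension jump: individual nonempty intersections only yield one vector per $L_j$, and it is the interplay between the finite Grassmannian cover and the witness vectors that upgrades "$T$ meets each $V_{L_j}$" to "$T$ contains a $(k+1)$-dimensional subspace from a finite list".
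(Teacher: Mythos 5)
Your proof is correct, but it follows a genuinely different route from the paper's. The paper argues by induction on $k$: in the base case $k=0$ a single special annihilator $(\vec{x}^{\vec{t}_1}-1)\cdots(\vec{x}^{\vec{t}_m}-1)$ from Theorem~\ref{th:decompo2} yields the lines $S_i=\langle\vec{t}_i\rangle$, and in the inductive step each $k$-dimensional $S_i$ produced at the previous level is fed back into the hypothesis to obtain, via Theorem~\ref{thm:main}, a new special annihilator whose vectors $\vec{t}^{(i)}_j$ avoid $S_i$; Lemma~\ref{lem:lemx} then forces any non-expansive $T\supseteq S_i$ to contain some $\vec{t}^{(i)}_j$, hence the $(k+1)$-dimensional span of $S_i$ and that vector. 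You instead run a one-shot compactness argument on $\mathrm{Gr}(k,d)$, which is sound: the sets $\{S':\vec{t}\in S'\}$ are closed (projections converge, so membership passes to limits), so each $\mathcal{U}_S$ is open and contains $S$ because Theorem~\ref{thm:main} gives $V_S\cap S=\emptyset$; and your dimension-jump step is airtight --- if $\mathrm{span}(T\cap V)$ had dimension at most $k$ it would extend to a $k$-dimensional $S'$ lying in some $\mathcal{U}_{L_j}$, contradicting $\emptyset\neq T\cap V_{L_j}\subseteq S'\cap V_{L_j}$. What the paper's induction buys is elementarity (no topology on the Grassmannian, and the candidate subspaces are built up explicitly level by level as flags $S_i\subsetneq\langle S_i,\vec{t}^{(i)}_j\rangle$); what your argument buys is a direct, non-recursive description of the finite list --- all $(k+1)$-dimensional spans of subsets of a single finite vector set $V$ --- together with the explicit observation that only finitely many instances of the hypothesis (at $L_1,\dots,L_N$) are ever needed. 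The only cosmetic caveat is that your list of $S_i$ is typically larger than the paper's, which is harmless for the statement as given.
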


\begin{proof}
We use mathematical induction on $k$. The base case $k=0$ is easy: The assumption that
for $S=\{\vec{0}\}$ there exists  $f\in\Per(X)$
such that $\supp(f)\cap S=\{\vec{0}\}$ means that $X$ has a non-zero annihilator.
By Theorem~\ref{th:decompo2} there is a special annihilator $(\vec{x}^{\vec{t}_1} - 1) \cdots (\vec{x}^{\vec{t}_m} - 1)$. By Lemma~\ref{lem:lemx}, a $(d-1)$-dimensional space that does not contain any of the vectors $\vec{t}_i$ is expansive for $X$, so the spaces $S_i=\langle \vec{t}_i\rangle$
for $i\in\{1,\dots ,m\}$ satisfy the claim.

Consider then $k\geq 1$ and suppose the claim is true with $k-1$ in place of $k$. The assumption is that
 for every $k$-dimensional linear subspace $S\subseteq \R^d$ there exists
$f\in\Per(X)$ such that $\supp(f)\cap S=\{\vec{0}\}$. Then the analogous assumption with $k-1$ in place of $k$ holds, so that by the inductive hypothesis there exist
$k$-dimensional linear
subspaces $S_1,\dots ,S_n$ such that every non-expansive space contains
some $S_i$. By the assumption, for every $S_i$ there exists $f_i\in
\Per(X)$ such that $\supp(f_i)\cap S_i=\{\vec{0}\}$. This means,
by Theorem~\ref{thm:main}, that for every $i\in\{1,\dots ,m\}$ the subshift
$X$ has a special annihilator $(\vec{x}^{\vec{t}^{(i)}_1} - 1) \cdots (\vec{x}^{\vec{t}^{(i)}_{m_i}} - 1)$ such that $\vec{t}^{(i)}_j\not\in S_i$
for all $j\in\{1,\dots, m_i\}$. Again,
by Lemma~\ref{lem:lemx}, a $(d-1)$-dimensional space that for some $i$
does not contain any of the
vectors $\vec{t}^{(i)}_j$ for $j\in\{1,\dots, m_i\}$ is expansive for $X$.
We conclude that every non-expansive $(d-1)$-dimensional subspace $S$ contains for some $i\in\{1,\dots n\}$ the $k$-dimensional subspace $S_i$, and for some $j\in\{1,\dots, m_i\}$ the vector $\vec{t}^{(i)}_j$. Consequently, $S$ contains the $(k+1)$-dimensional
subspace generated by $S_i$ and $\vec{t}^{(i)}_j\not\in S_i$.
There are finitely many choices of $i$ and $j$.
\end{proof}

In particular, if a $d$-dimensional subshift $X$ has the property that for every $(d-2)$-dimensional subspace $S$ of $\R^d$  there exists
$f\in\Per(X)$ that satisfies (\ref{eq:eq1}), then all but finitely many $(d-1)$-dimensional spaces are expansive for $X$.

\subsection*{Fibers}

The existence of $f\in\per(X)$ that satisfies the condition (\ref{eq:eq1}) can often
be conveniently given in terms of a linear combinations of ``slices'' of
periodizers parallel to $S$.
Let $S\subseteq\R^d$ be a linear subspace. We call a Laurent polynomial $f$ an $S$-\emph{fiber} if $\supp(f)\subseteq S$. Since products and sums of
$S$-fibers are $S$-fibers, all $S$-fibers form a subring of
the Laurent polynomial ring.

By the \emph{restriction} of a Laurent polynomial $f$ in a subspace $S$ we mean the $S$-fiber
$$
\sum_{\vec{u}\in \supp(f)\cap S} f_{\vec{u}}\vec{x}^{\vec{u}},
$$
and we denote it by $f\restriction S$. Thus the restriction is the sum of the monomials
of $f$ that lie in $S$.
The $S$-fibers of a Laurent polynomial ideal $I$
is the set $I\restriction S$ of all $f\restriction S$ for $f\in I$. The set $I\restriction S$ is an ideal of the ring of $S$-fibers. By the $S$-fibers of a single polynomial $f$ we
mean the restrictions $\vec{x}^{\vec{u}}f(\vec{x}) \restriction S$ over $\vec{u}\in \Z^d$, that is, the ``slices'' of $f$ along translated $S$.

The condition (\ref{eq:eq1})
that $\supp(f)\cap S=\{\vec{0}\}$ for some element $f$ of an ideal $I$
is simply stating that $I\restriction S$ contains the monomial $1$, \emph{i.e.}, it is
the complete $S$-fiber ring.
In practice then, verifying this condition for the periodizer ideal $\Per(X)$ of
a subshift amount to finding a non-zero monomial as a
linear combination of $S$-fibers of various $f\in \Per(X)$.

\begin{figure}[t]
\begin{center}
\includegraphics[width=0.4\textwidth]{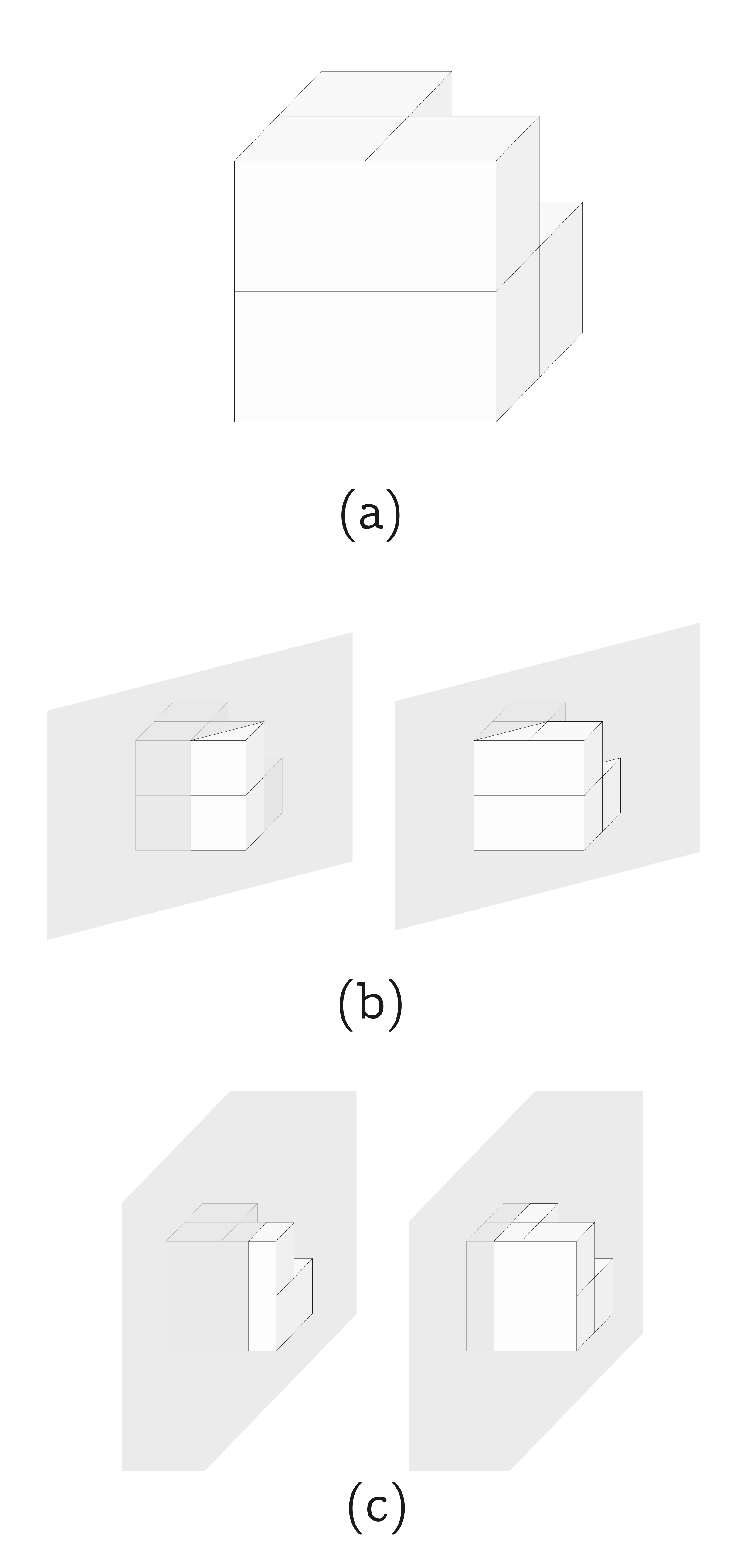}
\end{center}
\caption{(a) The $2\times 2\times 2$ cube missing a corner, studied in Example~\ref{ex:missingcorner}, (b) two fibers parallel to an edge that together
generate a monomial, (c) two fibers parallel to a face that generate a monomial.}
\label{fig:missingcorner}
\end{figure}

\begin{example}
\label{ex:missingcorner}
Let $d=3$ and $D=\{1,\dots ,n_1\}\times \{1,\dots ,n_2\}\times\{1,\dots ,n_3\}\setminus \{(n_1,n_2,n_3)\}$ be a tile for some $n_1,n_2,n_3\geq 2$. The tile is a
rectangular parallelepiped of size $n_1\times n_2\times n_3$  with the missing corner
$(n_1,n_2,n_3)$. See Figure~\ref{fig:missingcorner}(a) for an illustration in the case $n_1=n_2=n_3=2$. Let us prove that every tiling of $\Z^3$ by translations of $D$ is strongly periodic. We prove this by showing that for every two-dimensional linear subspace
$S=\langle\vec{u}\rangle^\perp$ the
$S$-fibers of the periodizer $f_D$ generate a non-zero monomial. Then there is
also a periodizer $f$ that satisfies $\supp(f)\cap S=\{\vec{0}\}$, and we can conclude
strong periodicity using Corollary~\ref{cor:maincor} for $X=\ocl{c}$.

Let $\vec{u}=(a,b,c)$, and consider the following case analysis based on $a$, $b$ and $c$:
\smallskip

\noindent
$\bullet$ If $a\neq 0$, $b\neq 0$ and $c\neq 0$, then one of the corners $(1,1,1)$, $(n_1,1,1)$,
$(1,n_2,1)$ or $(1,1,n_3)$ of $D$ has a unique inner product with $\vec{u}$, and thus provides a monomial $S$-fiber of $f_D$.
\smallskip

\noindent
$\bullet$ If $a\neq 0$ and $b\neq 0$, but $c=0$, then
$f(\vec{x})=1+x_3+x_3^2+\dots +x_3^{n_3}$ is one of the fibers of $f_D$. But there is also a fiber
$g(\vec{x})=1+x_3+x_3^2+\dots +x_3^{n_3-1} + p(\vec{x})f(\vec{x})$ given by the
slice through the missing
corner of $D$, where $p(\vec{x})$ is some polynomial capturing the positions
of full columns on the same plane as the missing corner. See Figure~\ref{fig:missingcorner}(b) for an illustration of this case.
Fibers $f$ and $g$ generate a non-zero monomial
$(1+p(\vec{x}))f(\vec{x}) - g(\vec{x})=x_3^{n_3}$.
The cases when $a=0$ or $b=0$ instead of $c=0$ are symmetric.
\smallskip

\noindent
$\bullet$ Finally, consider the case $a\neq 0$ but $b=0$ and $c=0$. In this case $f_D$ has fibers
$f(\vec{x})=\sum_{i=1}^{n_2}\sum_{j=1}^{n_3}x_2^ix_3^j$ and
$f(\vec{x})-x_2^{n_2}x_3^{n_3}$ whose difference is a non-zero monomial. The
fibers are obtained from slices not containing the missing corner, and containing the missing corner of $D$, respectively.
See Figure~\ref{fig:missingcorner}(c) for
an illustration of this case. Cases where $b\neq 0$ or $c\neq 0$ instead of $a\neq 0$
are symmetric.
\end{example}

\begin{example}
\label{ex:ex4}
Let $D=B_2^d$ be the radius-$2$ Lee sphere in dimension $d\geq 2$, defined in Example~\ref{ex:ex2}. Let us prove that the subspace
$S=\langle(1,1,\dots,1)\rangle^\perp$ is expansive for the subshift $X=T_D$ of
valid tilings of $\Z^d$ by $D$. Note that the direction $\vec{u}=(1,1,\dots,1)$
of determinism is perpendicular to
a $(d-1)$-dimensional discrete facet of $D$, and thus it is
intuitively ``maximally non-deterministic'' among all directions.
We show that a monomial is generated by
two $S$-fibers of $f_D$ corresponding to positions of $D$ having
inner products $0$ and $1$ with $\vec{u}=(1,1,\dots,1)$.
The first fiber, capturing the monomials $\vec{x}^{\vec{v}}$ for $\vec{v}\in D$
with $\inner{\vec{v}}{\vec{u}}=0$ is
$$
f(\vec{x})=1+\sum_{ \begin{array}{c} \scriptstyle  1\leq i,j\leq d\\ \scriptstyle i\neq j\end{array}} x_ix_j^{-1}.
$$
The second fiber, corresponding to positions $\vec{v}\in D$
with $\inner{\vec{v}}{\vec{u}}=1$ is (a monomial multiple of)
$$
g(\vec{x})=\sum_{1\leq i\leq d} x_i.
$$
Because
$$
(x_1^{-1}+x_2^{-1}+\dots x_d^{-1})g(\vec{x}) = f(\vec{x})+(d-1),
$$
we have that the non-zero monomial $d-1$ is an $S$-fiber of $\per(X)$.
\end{example}

\begin{remark}
It remains for future research to determine
whether in the case of Lee spheres $D=B_2^d$
the $S$-fibers of $f_D$ generate a non-zero monomial for all $(d-1)$-dimensional subspaces $S$. If this is the case then
$D=B_2^d$ can only admit strongly periodic tilings, thus proving that the weak and the strong Golomb-Welch conjectures are equivalent for radius-$2$ Lee spheres.
\end{remark}

Note that our methods show that certain subshifts can only contain strongly periodic
configurations. This does not imply that there necessarily are any elements in the subshifts -- the subshift may just as well be empty. For example, it is known
that the Lee sphere $D=B_2^d$ considered in Example~\ref{ex:ex4} does not tile $\Z^d$ in the cases $d\leq 5$, so that in these cases Example~\ref{ex:ex4} concerns
the empty subshift!

\begin{example}
\label{ex:ex8}
Let us continue with the radius-$2$ Lee sphere $D=B_2^3$ in dimension $d=3$,
 illustrated in Figure~\ref{fig:leespheres}. Let us prove that for every
 plane $S=\langle\vec{u}\rangle^\perp$ the $S$-fibers of $f_D$ generate a non-zero monomial. This implies that valid tilings of $\Z^3$
 by $D$ are strongly periodic. However,
 as pointed out above, there are no valid tilings by $D$ so this implication is uninteresting. But the result more broadly implies
 that all configurations $c$ that are periodized by $f_D$, not only the tilings by $D$,
  are strongly periodic.

 Let $\vec{u}=(a,b,c)$. By the symmetries of $D$ we may assume that $a\geq b\geq c\geq 0$. Consider the following case analysis based on $a$, $b$ and $c$:
 \smallskip

 $\bullet$ If $a>b\geq c\geq 0$ then $\vec{v}=(2,0,0)$ is the unique element of $D$ such that $\inner{\vec{v}}{\vec{u}}=2a$, so that $\vec{x}^{\vec{v}}=x_1^2$ provides a monomial $S$-fiber.
  \smallskip

 $\bullet$ If $a=b>c>0$ we take the two $S$-fibers of $f_D$ corresponding to positions of $D$ having inner products $2a$ and $a+c$ with $\vec{u}$. The first fiber, capturing the monomials $\vec{x}^{\vec{v}}$ for $\vec{v}\in D$ with $\inner{\vec{v}}{\vec{u}}=2a$ is
 (a monomial multiple) of
$$f(\vec{x})=x_1^2+x_1x_2+x_2^2,$$
 while the second fiber, corresponding to positions $\vec{v}\in D$
with $\inner{\vec{v}}{\vec{u}}=a+c$ is (a monomial multiple of)
$$g(\vec{x})=x_1x_3+x_2x_3.$$
Their linear combination $f(\vec{x})-x_1x_3^{-1}g(\vec{x})=x_2^2$ is a monomial.
 \smallskip

 $\bullet$ If $a=b>c=0$ then we need three fibers, coresponding to inner product values $2a$, $a$ and $0$. The fibers are (monomial multiples of)
 $$
 \begin{array}{rcl}
 f(\vec{x}) &=& x_1^2+x_1x_2+x_2^2,\\
 g(\vec{x}) &=& x_1+x_2+x_1x_3+x_2x_3+x_1x_3^{-1} + x_2x_3^{-1},\\
 h(\vec{x}) &=& x_3^2+x_3+1+x_3^{-1}+x_3^{-2}+x_1x_2^{-1}+x_1^{-1}x_2.
 \end{array}
 $$
 See Figure~\ref{fig:leestronglyperiodic}. As a linear combination of these we obtain the fiber
 $$
 p(\vec{x}) = x_1^{-2}(1+x_3+x_3^2)f(\vec{x})-x_1^{-2}x_2x_3 g(\vec{x}) = 1+x_3+x_3^2,
 $$
 and then further
 $$
 h(\vec{x})-(1+x_3^{-2})p(\vec{x})-x_1^{-1}x_2^{-1}f(\vec{x}) = -2,
 $$
 a non-zero monomial.
\smallskip

  $\bullet$ The case $a=b=c>0$ was demonstrated in  Example~\ref{ex:ex4}.

\end{example}

\begin{figure}[t]
\begin{center}
\includegraphics[width=0.95\textwidth]{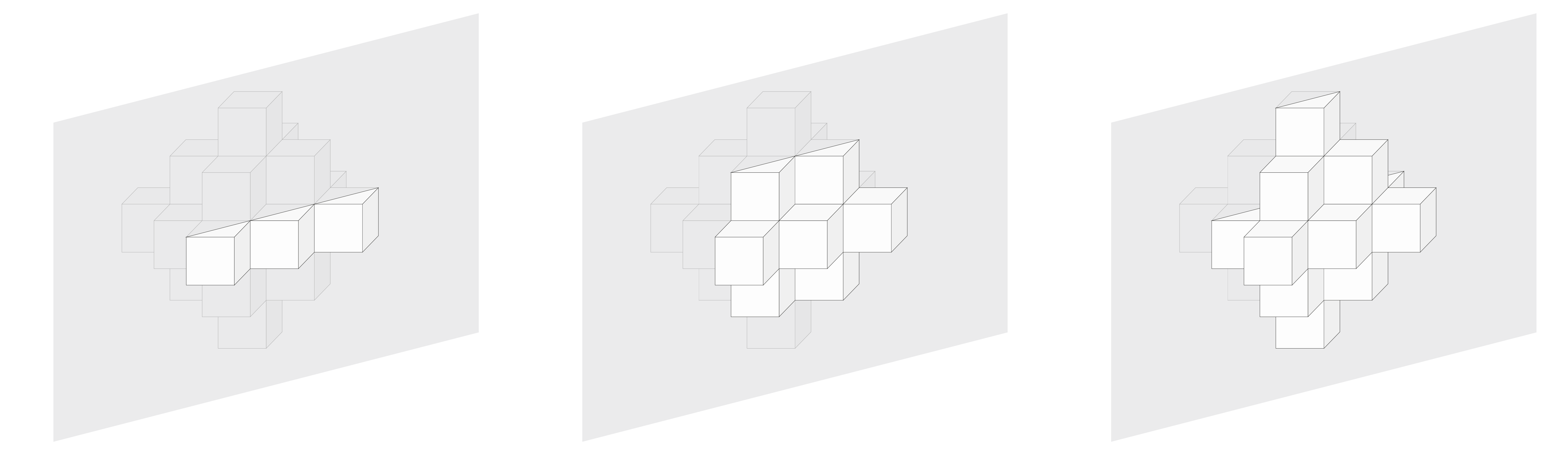}
\end{center}
\caption{Three planes that slice fibers $f$, $g$ and $h$ in the case $a=b>c=0$ of Example~\ref{ex:ex8}.}
\label{fig:leestronglyperiodic}
\end{figure}

Let us finish with some remarks concerning the two-dimensional case $d=2$. In this case our tool to infer strong periodicity of a configuration
is essentially proved in~\cite{fullproofs,surveyjarkko}
using the structure of the annihilator and periodizer ideals. Non-monomial
$S$-fibers for
one-dimensional linear subspaces $S\subseteq\R^2$ are called \emph{line polynomials} as they have at least two monomials and all monomials are along the same line. For any two-dimensional configuration $c$ the periodizer ideal
$\per(c)$ is known to be a principal ideal $\langle\phi_1\phi_2\cdots \phi_m\rangle$
generated by a product of line polynomials $\phi_i$~\cite{fullproofs,surveyjarkko}.
If $c$ has a periodizer $f$
that has no line polynomial factors in any direction then from $f\in \langle\phi_1\phi_2\cdots \phi_m\rangle$ we conclude that $m=0$ so that
$\per(c)=\langle 1\rangle$, implying that $c$ is strongly periodic.
In~\cite{DLT} it was noted that this fact can also be proved without referring to the structure of $\per(c)$ simply by noting that $f$ and the special annihilator
$g(\vec{x})=(\vec{x}^{\vec{t}_1} - 1) \cdots (\vec{x}^{\vec{t}_m} - 1)$ guaranteed by
Theorem~\ref{th:decompo} do not have any common factors as $f$ has no line polynomial factors while all irreducible factors of $g$ are line polynomials. It follows that
there are non-zero
linear combinations of $f$ and $g$ where either one of the two variables has
been eliminated. (These are given by the resultants of $f$ and $g$ with respect to variables $x_1$ and $x_2$, respectively.) Thus there are non-zero annihilators without variables $x_1$ or $x_2$, which implies periodicity of $c$ in horizontal and vertical
directions, \emph{i.e.}, its strong periodicity..

The present paper provides a third proof of this fact that a periodizer without line polynomial factors implies strong periodicity of a two-dimensional configuration.
The present proof has the advantage that it scales to higher dimensions.
One should note, however, that in higher dimensions the statement cannot be given in
terms of $(d-1)$-dimensional
$S$-fibers not having common factors, but rather in terms of
$S$-fibers generating monomial $1$, \emph{i.e.}, generating the full ring.
As line polynomials are essentially one-variate Laurent polynomials, the two conditions are equivalent in the two-dimensional case:
a collection of one-variate Laurent polynomials generate $1$
if and only if the polynomials have no non-trivial common factors. But this is no longer true for
polynomials with two or more variables. (Think of $x-1$ and $y-1$: they have no common factors but as they have a common zero $x=1$, $y=1$, there is no way to express $1$ as their linear combination.)

\section{Conclusion}
\label{sec:conclusion}

We have discussed a method to infer strong periodicity of a multidimensional configuration from its annihilators or periodizers. The method generalizes the two-dimensional technique used in~\cite{fullproofs,surveyjarkko,DLT} to arbitrary dimensions $d>2$. The new method is in fact based on a more general condition on the annihilators or periodizers that implies expansivity of a multidimensional subshift in a given direction. We then use the well known fact that expansivity in all directions implies strong periodicity of the elements of the subshift.

We demonstrated our technique with several examples in the setup of tilings of $\Z^d$ by translated copies of a single tile. The famous Golomb-Welch -conjecture can be stated in this context, and we provided examples related to this conjecture. It remains an interesting topic for future research to see if our method could provide the equivalence of the weak and strong variants of the conjecture, by showing that all tilings by Lee spheres of radius $d\geq 2$ must be strongly periodic. In all the cases that we looked at, it was the case that the fibers extracted from the Lee sphere generated the monomial $1$.

\bibliographystyle{unsrt}

\bibliography{DLT_invited}

\end{document}